\newtheorem{theorem}{Theorem}
\newtheorem{lemma}{Lemma}
\newtheorem{definition}{Definition}
\newtheorem{example}{Example}
\journal{Information Sciences}
\begin{document}

\begin{frontmatter}

%% Title, authors and addresses

%% use the tnoteref command within \title for footnotes;
%% use the tnotetext command for theassociated footnote;
%% use the fnref command within \author or \address for footnotes;
%% use the fntext command for theassociated footnote;
%% use the corref command within \author for corresponding author footnotes;
%% use the cortext command for theassociated footnote;
%% use the ead command for the email address,
%% and the form \ead[url] for the home page:
%% \title{Title\tnoteref{label1}}
%% \tnotetext[label1]{}
%% \author{Name\corref{cor1}\fnref{label2}}
%% \ead{email address}
%% \ead[url]{home page}
%% \fntext[label2]{}
%% \cortext[cor1]{}
%% \affiliation{organization={},
%%             addressline={},
%%             city={},
%%             postcode={},
%%             state={},
%%             country={}}
%% \fntext[label3]{}

\title{Local Differentially Private Frequency Estimation based on Learned Sketches}

%% use optional labels to link authors explicitly to addresses:
%% \author[label1,label2]{}
%% \affiliation[label1]{organization={},
%%             addressline={},
%%             city={},
%%             postcode={},
%%             state={},
%%             country={}}
%%
%% \affiliation[label2]{organization={},
%%             addressline={},
%%             city={},
%%             postcode={},
%%             state={},
%%             country={}}

\author[]{Meifan Zhang$^a$, Sixin Lin$^a$$^,$$^b$, Lihua Yin$^a$$^*$}

\affiliation{organization={Cyberspace Institute of Advanced Technology},%Department and Organization
            addressline={Guangzhou University},
            city={Guangzhou},
            postcode={510006},
            country={China}}

\affiliation{organization={Peng Cheng Laboratory},%Department and Organization
            addressline={},
            city={Shenzhen},
            postcode={518000},
            country={China}}

\cortext[]{Meifan Zhang and Sixin Lin should be considered joint first author.}

\begin{abstract}
%% Text of abstract
Sketches are widely used for frequency estimation of data with a large domain. However, sketches-based frequency estimation faces more challenges when considering privacy. Local differential privacy (LDP) is a solution to frequency estimation on sensitive data while preserving the privacy. LDP enables each user to perturb its data on the client-side to protect the privacy, but it also introduces errors to the frequency estimations. The hash collisions in the sketches make the estimations for low-frequent items even worse. In this paper, we propose a two-phase frequency estimation framework for data with a large domain based on an LDP learned sketch, which separates the high-frequent and low-frequent items to avoid the errors caused by hash collisions. We theoretically proved that the proposed method satisfies LDP. Our method is more accurate than the state-of-the-art frequency estimation methods under LDP including Apple-CMS, Apple-HCMS and FLH. The experimental results verify the performance of our method.
\end{abstract}

%%Graphical abstract
%\begin{graphicalabstract}
%\includegraphics{grabs}
%\end{graphicalabstract}

%%Research highlights

\begin{keyword}
%% keywords here, in the form: keyword \sep keyword
Frequency estimation \sep Local differential privacy \sep Sketches \sep Query processing
%% PACS codes here, in the form: \PACS code \sep code

%% MSC codes here, in the form: \MSC code \sep code
%% or \MSC[2008] code \sep code (2000 is the default)

\end{keyword}

\end{frontmatter}

%% \linenumbers

%% main text

\section{Introduction}
Frequency estimation is a traditional and important problem in data analytics.
At present, to solve the frequency estimation problem, it is necessary to consider not only the efficiency and space cost but also the risk of privacy leakage. In the process of collecting data from users and aggregating the frequencies, some sensitive information may leak. For example, when the studies do frequency statistics on some diseases or medicines, the data providers do not want to reveal their true illnesses or medications. Thus, we need to protect users' privacy while estimating the frequencies.

However, privacy-preserving for big data frequency estimation is not an easy task.
Conventional encryption techniques are too costly for big data, thus, they cannot meet the requirement of fast response for big data analytics. In the meantime, the data owners often use generalization and suppression techniques to achieve anonymization requirements and protect their data privacy~\cite{sweeney2002k}. But the disadvantage of anonymization techniques is that they do not provide a measure of privacy loss in big data analysis. In recent years, differential privacy (DP)~\cite{dwork2008differential} is a popular privacy-preserving solution due to its strong mathematical boundary of the leaked privacy. It adds noises to the aggregations to avoid the leakage of individual privacy. But it is difficult to find a trusted third party to aggregate the data from a large number of clients. Local differential privacy (LDP)~\cite{kasiviswanathan2011can} is a solution to this problem, which locally perturbs the raw data before sending it to the server. In this way, the server has no access to the raw data, thus, the privacy of each client is protected. Many companies, such as the Google~\cite{erlingsson2014rappor,fanti2015building}, Apple~\cite{APPLE}, and Microsoft~\cite{ding2017collecting} adopt LDP to collect and aggregate sensitive data from users.

However, frequency estimation for big data under LDP still faces some challenges.
On the one hand, it is difficult to get sufficiently accurate frequency estimation for big data with a large domain. Methods such as the Basic RAPPOR~\cite{fanti2015building}, OUE~\cite{wang2017locally}, and OLH~\cite{wang2017locally} can not handle the data with a large domain. Sketches-based methods use hash functions to map the data with a large domain to a sub-linear space. They are tailored for streaming data analysis on architectures even with limited memory such as single-board computers that are widely exploited for IoT and edge computing~\cite{yildirim2020differentially}. The methods such as RAPPOR~\cite{erlingsson2014rappor} and Apple-CMS~\cite{APPLE} use the bloom-filters or sketches to reduce the domain size, but the accumulated errors due to hash collisions reduces the estimation accuracy as the data grows rapidly. Separating the storage of high-frequent and low-frequent items is a way to avoid hash collisions. But it is difficult for the server to distinguish them since the perturbed values of different items are sufficiently similar according to LDP.
On the other hand, it is challenging to separate the storage of high-frequent items and low-frequent items while preserving privacy. Both the server and the clients have the risk of privacy leakage when treating the high-frequent and low-frequent items in different ways. Thus, methods separating the items by their frequencies while remaining their privacy are required.

To tackle the first challenge, we try to improve the accuracy of frequency estimation under LDP by avoiding the collisions between the high-frequent items and low-frequent items. However, an individual user with no prior knowledge cannot identify whether an item is high-frequent or not. Instead of enabling the server to distinguish the perturbed values, we try to enable each client to identify whether its item is a high-frequent one or not. We train a frequency model based on the aggregations of the perturbed values from some sample clients. The clients other than those in the samples can use the model to distinguish whether their items are high-frequent ones or not. Since the frequency of a high-frequent item can be accurately estimated based on the samples, we use the model to replace the storage of high-frequent items and leave the sketch for the low-frequent items. Thus, the hash collisions between high-frequent items and low-frequent items can be reduced.

To tackle the second challenge, we let the clients encode the high-frequent items and low-frequent items in different ways while satisfying the LDP. A naive idea to reduce the hash collisions between high-frequent items and low-frequent items is to let the clients only send the low-frequent items to the server since the high-frequent items can be predicted according to the frequency model. However, if the client does not send information to the server, it reveals that its item a high-frequent one. To avoid this privacy leakage, the clients with high-frequent items should also send some values to the server, and these values must be difficult to distinguish from the perturbed values of the low-frequent items. To avoid the errors caused by involving the perturbed high-frequent items in the sketch, we propose a method to make the perturbed values of high-frequent items uniformly disperse in the sketch, so that we can accurately evaluate and eliminate the impact of these items from the estimations. We prove that this method satisfies LDP and reduces the variances of estimations for low-frequent items.

\begin{figure}
\centering
\includegraphics[scale=0.4]{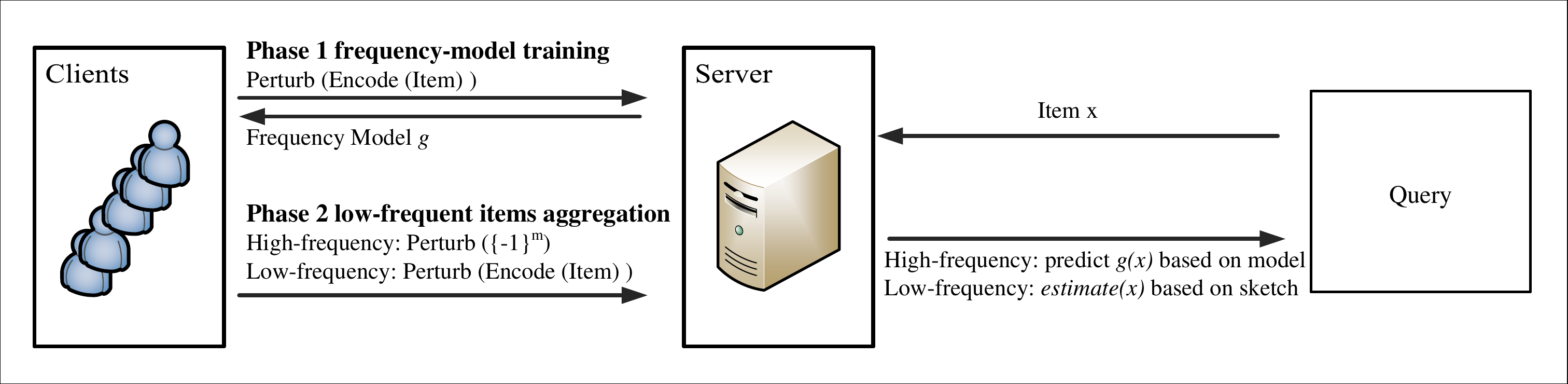}
\caption{The idea of LDPLCM framework.}
\label{Fig:intro_framework}
\end{figure}

In this paper, we propose a two-phase LDP frequency estimation algorithm as shown in Figure~\ref{Fig:intro_framework}. The first phase trains a frequency model based on the aggregations of the perturbed values from some sample clients. The model enables each client to distinguish high-frequent items and low-frequent items. It also enables the server to estimate the frequencies of high-frequent items by the predictions according to the model.
In the second phase, each client uses the frequency model to identify whether its item is high-frequent or not. The clients encode the high-frequent and low-frequent items in different ways while satisfying LDP. The server uses a sketch to aggregates the perturbed data from clients and estimate the frequencies of low-frequent items.

The main contributions of this work are summarized as follows:
\begin{itemize}
\item We present a local differentially private frequency estimation framework based on learned sketches (LDPLCM). It can estimate the frequency of sensitive data with a large domain and guarantee the high utility of estimations.
\item We proved that the proposed method satisfies LDP. It can treat the high-frequent items and low-frequent items in different ways to avoid the hash-collisions while preserving the privacy.
\item We proved that the proposed method is more accurate for low-frequent items than the LDP frequency estimation method Apple-CMS.
\item We conduct extensive experiments on both synthetic and real-world datasets. The experimental results show that the proposed method outperforms the state-of-the-art LDP frequency estimation algorithms including Apple-CMS, Apple-HCMS and FLH.

\end{itemize}
The remainder of this paper is organized as follows. In section 2, we survey the related work for this paper. In section 3, we introduce some preliminaries. Section 4 presents the LDPLCM algorithm. In section 5, the experimental results show the performance of the proposed algorithm. In section 6, we conclude the paper with future directions.

\section{Related Works}

Frequency estimation can be applied in many fields such as finding frequent items~\cite{li2020wavingsketch,karp2003simple,tai2018sketching}, hierarchical heavy hitters~\cite{cormode2003finding,tang2016graph}, network measurements~\cite{yang2019adaptive,liu2016one}. In ~\cite{basat2019randomized}, the authors use frequency estimation and top-k items identification to perform network monitoring.

Differential privacy is extensively studied for protecting users' privacy while enabling big data analysis. Frequency estimation of sensitive data poses a risk of privacy leakage. For example, users report their symptoms to the disease control department through their mobile phones in the medical Internet of Things. There has a risk of privacy leakage in the report because the third-party data recipients are not completely trusted. The central model of DP also faces the same problem that no trusted third-party server can be found. Thus, LDP~\cite{kasiviswanathan2011can} has been implemented to protect clients' privacy in frequency estimation, such as RAPPOR~\cite{erlingsson2014rappor}, Apple-CMS~\cite{APPLE}, Apple-HCMS~\cite{APPLE}, pure LDP~\cite{wang2017locally}, k-RR~\cite{kairouz2016discrete}. In~\cite{erlingsson2014rappor}, Erlingsson et.al combine randomized response with Bloom filters~\cite{bloom1970space} to satisfy $\epsilon$-LDP~\cite{duchi2013local}. They use the permanent random response instead of clients' initial data and calculate the instantaneous randomized response to perturb the permanent random response. Their extending work from RAPPOR is to learn the joint distributions and associations between unknown data dictionaries~\cite{fanti2015building}. These ways enhance the difficulty of tracking clients' activity for attackers, but they complicate the decoding process on the server. To this end, Apple-CMS and Apple-HCMS estimate the frequencies based on the perturbed data items in sketches and directly calculate the average value from hash entries without decoding. The advantage of Apple-HCMS over Apple-CMS is the reduction of communication costs. These two methods can be used for data in a large domain but they fail to decrease the estimation errors caused by the hash collisions.

Finding the trade-off between privacy budget and accuracy can improve the data utility in frequency estimation~\cite{wang2019locally,wang2019answering,murakami2019utility,wei2020asgldp,xu2020collecting}. Kairouz et.al~\cite{kairouz2016discrete} propose and prove that the hashed k-RR is optimal in the low privacy regime compared with RAPPOR. Wang et.al~\cite{wang2017locally} generalize the RAPPOR and design three perturbation methods including Direct Encoding(DE), Optimized Unary Encoding(OUE), Optimized Local Hashing(OLH) in the pure LDP framework. They prove an unbiased estimate and find the best parameters to minimize the variance of estimation. In~\cite{murakami2019utility}, Murakami et.al consider the data sensitivity of clients and propose the utility-optimized LDP(ULDP) mechanisms to maximize the utility. Jia et.al~\cite{jia2019calibrate} associate the prior knowledge with LDP in Calibrate framework to count the true items, and model the distribution probability from prior knowledge. To improve the efficiency, Flash Local Hashing(FLH) and Hadamard Response(HR)~\cite{cormode2021frequency} restrict the clients' choices of $k'$ hash functions. They are faster than OLH and Apple-HCMS by introducing a matrix $k' \times m (k' \ll n)$ to store the perturbed data items, which are suitable for small data domains.

Sketches have been refined over the past two decades as common tools for frequency estimation on data with a large domain. Many typical sketches are proposed including Count-Sketches~\cite{charikar2002finding}, Count-Min sketches~\cite{cormode2005improved}, Augmented sketches~\cite{roy2016augmented}, CU sketches~\cite{estan2002new}. These sketches have common structures, they use hash functions to map a large amount of data to a two-dimensional array to reduce the space cost. But they also face the errors caused by hash collisions, the hash collisions between high-frequent items and low-frequent items introduce great errors for the estimation of low-frequent items. To reduce the errors, many methods are proposed to distinguish the high-frequent items from the low-frequent ones. Augmented sketches~\cite{roy2016augmented} add a filter on the top of the Count-Min sketch to store the frequencies of top-k items. The cold filter~\cite{zhou2018cold} and HeavyGuardian~\cite{yang2018heavyguardian} separate the cold items and hot items into two stages.

However, separating the high-frequent items from the low-frequent items is non-trivial under LDP due to the privacy-preserving. The LDPMiner~\cite{qin2016heavy} is a two-phase heavy hitter mining algorithm. It gathers a candidate set of heavy hitters in the first phase and estimates the frequencies of these candidates in the second phase. But it only focuses on the frequency estimation for the top-k frequent items. Sending the frequency property of each item to the clients is costly for data with a large domain. Inspired by the learned index~\cite{kraska2018case}, some learning-based frequency estimation methods are proposed in recent years~\cite{hsu2019learning, zhang2020learned, zhou2019rl}. We also attempt to use a lightweight model to learn the frequency properties of the sensitive data from the clients. However, the learning-based sketch cannot be directly applied to the frequency estimation under LDP. We need to separate the high-frequent items and low-frequent items while preserving privacy. In this paper, we propose a two-phase frequency estimation algorithm based on the learned sketch under LDP. In the first phase, we train a lightweight model to predict the frequency of each item based on the aggregations of the perturbed values from some sample clients. In the second phase, the rest of the clients identify and encode the high-frequent and low-frequent items in different ways according to the frequency model. The proposed method satisfies LDP, and it is more accurate than the state-of-the-art sketch-based frequency estimation method Apple-CMS.

%How to improve the efficiency and accuracy with the requirements of the LDP framework is still researching. In this paper, we use a two-phase structure for estimation. In phase1, we sampling and use machine learning methods instead Apple-CMS to train the model for high-frequent items. In phase2, we distinguish the items according to the threshold and the low-frequent items are calculated by sketch.

\section{Preliminaries}
In this section, we provide some preliminaries including LDP and sketches for frequency estimation.
\subsection{Local Differential Privacy}
DP has been accepted as the de facto standard for data privacy. But DP is not applicable when there is no trusted aggregator. LDP is proposed to handle this obstacle. In the local setting for DP, there are a large number of users and one aggregator. To protect the privacy of individual users, each user locally perturbs its private data and sends the perturbed value to the aggregator.

\begin{definition}
($\epsilon$-Local Differential Privacy). An algorithm $A(\cdot)$ satisfies $\epsilon$-LDP if and only if for $\epsilon>0$ and any inputs $v_1, v_2\in D$ from the dataset $D$, we have
\begin{equation}\label{Def:LDP}
  \forall T\in A(D):Pr[A(v_1)\in T]\le \mathrm{e}^{\epsilon} Pr[A(v_2)\in T],
\end{equation}
where $A(D)$ denotes the set of all possible outputs of the algorithm $A$.
\end{definition}

%\textcolor{blue}{
%\begin{definition}
%(($\epsilon$, $\delta$)-Local Differential Privacy). An algorithm $A(\cdot)$ satisfies ($\epsilon$, $\delta$)-LDP if and only if for the $\epsilon>0$ and all the possible pairs of input $v_1, v_2\in D$, we have
%\begin{equation}\label{Def:LDP}
%  \forall T\in Range(A):Pr[A(v_1)\in T]\le \mathrm{e}^{\epsilon} Pr[A(v_2)\in T] + \delta,
%\end{equation}
%where $\delta$ is typically small, Range($A$) denotes the set of all possible outputs of the algorithm $A$. Generally speaking, ($\epsilon$, $\delta$)-LDP means that an algorithm A achieves $\epsilon$-LDP with probability at least $1-\delta$. Actually, ($\epsilon$, $\delta$)-LDP is more general since the latter in the special case of $\delta$ = 0 becomes the former.
%\end{definition}}

LDP ensures that the outputs of the random algorithm with different inputs are similar enough, thus the perturbed values will not leak the privacy of the inputs.
\subsection{Sketches}
Sketches support count queries over data with a large domain, and they summarize a large amount of data into sub-linear space. In recent years, many typical sketches such as the count sketches~\cite{charikar2002finding} and the count-min sketches~\cite{cormode2005improved} are proposed to provide more accurate estimations. The Apple-CMS and Apple-HCMS algorithms are also designed based on the count-min sketches to provide the frequency estimation under LDP. Therefore, we review these two state-of-the-art sketches including the count-min sketch and the count sketch.

\subsubsection{Count-Min Sketch}

A Count-Min sketch~\cite{cormode2005improved} with parameters $(\epsilon_{CM},\delta_{CM})$ is represented by a two-dimensional array counts with width $m$ and depth $k$: $count[1,1]$,...,$count[k,m]$. Given parameters $(\epsilon_{CM},\delta_{CM})$, set $m=\lceil e/\epsilon_{CM}\rceil$ and $k=\lceil ln(1/\delta_{CM})\rceil$. Each cell of the array is initialised with a zero. The $d$ hash functions $h_1,..., h_d:\{1,...,n\}\rightarrow\{1,...,m\}$ are used to update the count in each cell of the array. The parameters $(\epsilon_{CM},\delta_{CM})$ have nothing to do with the parameter $\epsilon$ of DP. We add subscripts ``CM'' to the parameters of the Count-Min sketch to distinguish them.
Count-Min sketches has two basic operations: \emph{update} and \emph{estimate}.

Update($x$):
\begin{equation}\label{Equation:Count-Min_Update}
  count[i,h_i(x)]\leftarrow count[i,h_i(x)]+1
\end{equation}

Estimation($x$):
\begin{equation}\label{Equation:Count-Min_Estimation}
  min_{i\in[1,k]}\{count[i,h_i(x)]\}
\end{equation}

This is an over-estimation due to the hash collisions $\hat{a_i}\ge a_i$, where $a_i$ is the count of the item $x$ and $\hat{a_i}$ is the estimation. The estimation has an upper bound, with the probability at least (1-$\delta_{CM}$), $\hat{a_i}\le a_i + \epsilon_{CM} \Vert A\Vert_1$, where $\Vert A\Vert_1=\sum_{i=1}^{n}|a_i|$~\cite{cormode2005improved}.

The Apple-CMS~\cite{APPLE} adopts a variance of Count-Min sketch to encode and aggregate the sensitive data. It encodes the value into a one-hot vector with a hash function randomly chosen from the $h_1,... ,h_k$. It then perturbs the vector and sends the perturbed vector to the server. The server adds the vector to the corresponding line of the sketch. It estimates the frequency of an item $x$ with the appropriate correction of $mean_{i\in[1,k]}\{count[i,h_i(x)]\}$.

\subsubsection{Count Sketch}
%A count sketch with parameters $(\epsilon,\delta)$ is represented by a two-dimensional array counts with width $w$ and depth $d$: $count[1,1]$,...,$count[d,w]$. Given parameters $(\epsilon,\delta)$, set $w=\lceil e/\epsilon\rceil$ and $d=\lceil ln(1/\delta)\rceil$. Each cell of the array is initialised with a zero.
The Count Sketch~\cite{charikar2002finding} has the same structure with the count-min sketch. The only difference between them is that the count sketch adds another hash function $s_i$ mapping each item to $\{-1,1\}$ for each line of the array. %The sketch uses $d$ hash functions $h_1$,... , $h_d:\{1,...,n\}\rightarrow\{1,...,w\}$ to update the count in each cell of the array.It uses another hash function $s_i$ mapping each item to $\{-1,1\}$ for each line of the array,
That is, $s_i: x\rightarrow \{-1,1\}$. Its operations \emph{update} and \emph{estimate} are as follows.

Update($x$):
\begin{equation}\label{Equation:Count_Update}
  count[i,h_i(x)]\leftarrow count[i,h_i(x)]+s_i(x)
\end{equation}

Estimate($x$):
\begin{equation}\label{Equation:Count_Estimation}
  mean_{i\in[1,d]}\{count[i,h_i(x)]\cdot s_i(x)\}
\end{equation}

\section{LDP Learned Count-Mean Sketches (LDPLCM) for Frequency Estimation}

\subsection{The Framework of LDPLCM}

In this section, we will introduce the framework of the proposed method LDPLCM.

The hash collisions between high-frequent items and low-frequent items largely increase the estimation errors, especially when the data domain is very large. Therefore, separating the high-frequent items and low-frequent items is a way to reduce the hash collisions, which is adopted in many sketches-based frequency estimation methods. However, separating the high-frequent and low-frequent items is non-trivial under LDP. Each client under LDP does not know whether its value is a high-frequent one or not.

\begin{figure}
\centering
\includegraphics[scale=0.5]{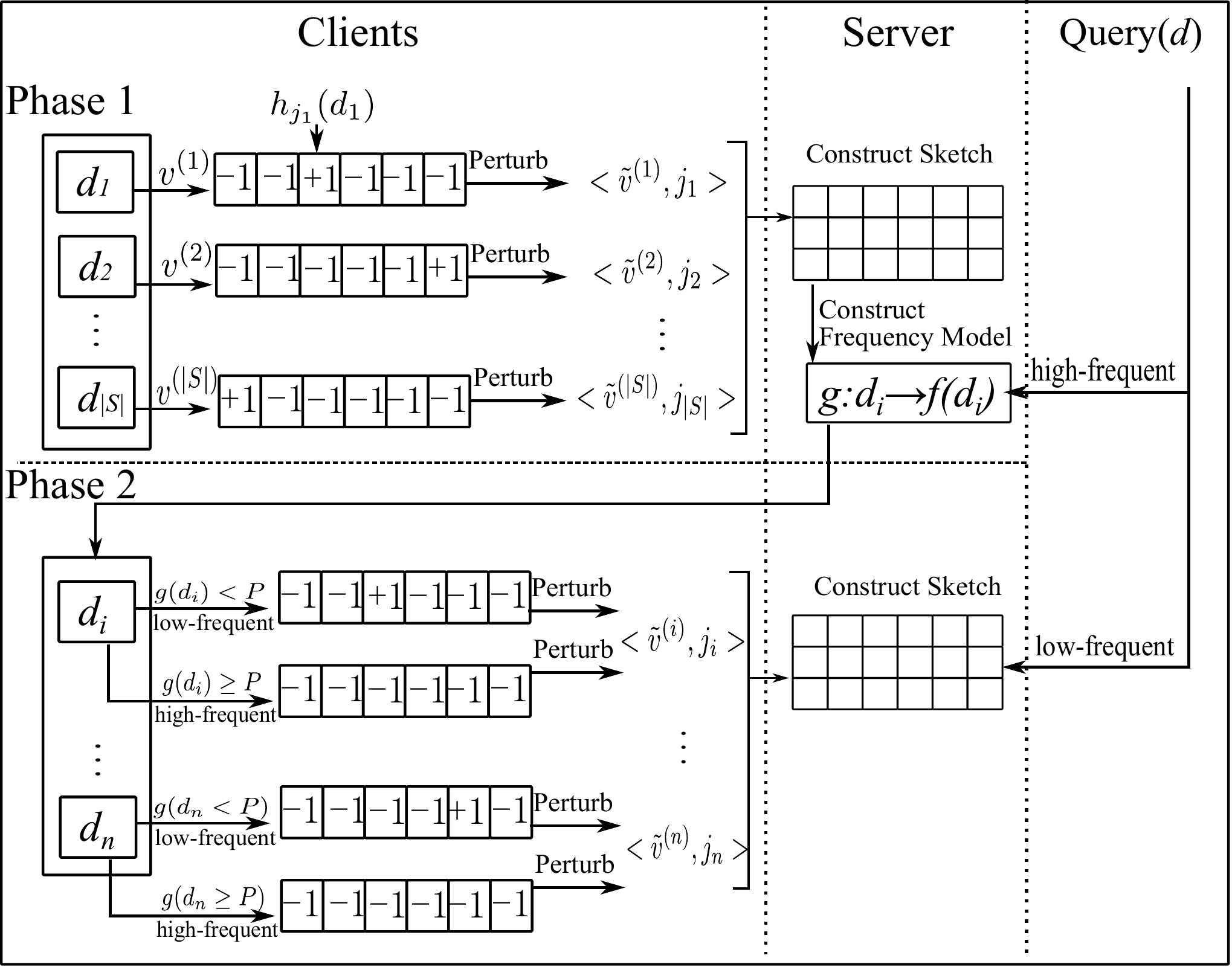}
\caption{The framework of LDPLCM.}
\label{Fig:Framework}
\end{figure}

Our main point is to reduce the errors caused by the hash collisions between the high-frequent items and low-frequent items while satisfying LDP. We will first introduce the proposed framework and then detail how to achieve the goal of reducing errors by solving a series of tasks.

The framework of our LDPLCM algorithm is shown in Figure~\ref{Fig:Framework}. In the first phase, we randomly choose some sample clients, and we encode and perturb each value of the sample clients in the same way as the Apple-CMS. The server constructs a sketch with the perturbed values and trains a frequency model $g$ mapping each data $d_i$ to its frequency $f(d_i)$ estimated by the sketch. It finds a boundary $P$ to separate the high-frequent and low-frequent items, which enables the clients in the second phase to distinguish the high-frequent items and low-frequent items.
In the second phase, the remaining clients distinguish whether their items are high-frequent or not according to the frequency model $g$ and the boundary $P$. If $g(d_i)<P$, then $d_i$ is regarded as a low-frequent item and treated in the same way as phase 1. If $g(d_i)\ge P$, then $d_i$ is regarded as a high-frequent item, and it is encoded with a vector of $\{-1\}^m$. In this way, the high-frequent items and low-frequent items are differentially encoded, but they are perturbed in the same way. The server in phase 2 empties the sketch and reconstructs it with the perturbed values from the clients. When querying the frequency of an item $d$, it still uses the frequency model to distinguish whether it is a high-frequent item or not. If it is a high-frequent item, the server predicts its frequency according to the model, otherwise, the server estimates its frequency based on the sketch.

We then detail how the framework reduces the estimation errors while preserving privacy by solving a series of tasks.

The first task is to enable each client to distinguish whether its item is a high-frequent one or not. Inspired by the LDPMiner, which finds the heavy-hitters in the first phase and estimates the frequencies of the heavy-hitters in the second phase, we also propose a two-phase algorithm. In the first phase, we attempt to learn the frequency property of items according to the aggregations of some sample clients. To avoid the significant cost of sending the frequency property of a large number of items, we train a lightweight model to predict the frequency of each item. In the second phase, we enable the clients other than the samples to distinguish the high-frequent items from the low-frequent ones according to the frequency model learned in the first phase.

The second task is to avoid the errors caused by the hash collisions between the high-frequent and low-frequent items. Even though the clients can distinguish the high-frequent items and low-frequent items according to the model, it is still non-trivial for the server to separate the storage of high-frequent and low-frequent ones. The reason is that each client perturbs its value before sending it to the server. To separate the storage of items with different frequency properties, we reuse the frequency model to replace the storage of high-frequent items and leave the sketch for the low-frequent ones. Since the model is trained based on the aggregations of the sample clients, its predictions for high-frequent items are more reliable than those for low-frequent ones. In this way, only the low-frequent items are aggregated in the sketch in the second phase, thus the hash collisions between high-frequent ones and low-frequent ones are avoided. However, it causes another problem, the client leaks information to the server whether its item is high-frequent or not if it only sends the perturbed low-frequent items to the server. At such, the client also needs to send some information to the server to avoid leaking privacy.

The third task is to protect the privacy of each client while correcting the errors caused by involving the perturbed values of high-frequent items into the sketch. This sounds contradictory, because DP works by making the probability of getting the same output from different inputs similar. In order to achieve the goal of correcting the errors, we encode all the  high-frequent items to the same vector $\{-1\}^m$. Thus, even the high-frequent items are encoded, perturbed, and sent to the server, they cause no collisions with the low-frequent items. We prove that both the encoding and perturbing satisfy LDP. Since all the high-frequent items are encoded into $\{-1\}^m$, and each bit is perturbed with the same probability, the errors caused by high-frequent items are uniformly dispersed in the sketch. Therefore, we can accurately evaluate and eliminate these errors from the estimations.

In this way, our method achieves both privacy-preserving and errors reduction. On the one hand, it protects the privacy of each client according to LDP. On the other hand, it avoids the errors caused by hash collisions between high-frequent items and low-frequent items.

\begin{algorithm}
\caption{LDPLCM}\label{Algorithm:LDPLCM}
\begin{algorithmic}[1]
\State $S\leftarrow$ sample clients with a sampling rate $r$ \Comment{Phase 1: Train the frequency model}
\For {each client $S_i$ with data $d_i$}
    \State $<\tilde{v}^{(i)}, j_i>\leftarrow$ LDPLCMclient($d_i$, $phase=1$)
    \State Send $<\tilde{v^{(i)}}, j_i>$ to Server
\EndFor
\State Server receives $DataInput\leftarrow$ $\{<\tilde{v}^{(1)}, j_1>$,  $<\tilde{v}^{(2)}, j_2>$,..., $<\tilde{v}^{(|S|)}, j_{|S|}>\}$
\State Public frequency model $g$, and boundary $P\leftarrow$ LDPLCMserver-construction($DataInput$, $phase=1$)
\For {each of the clients other than the sample clients} \Comment{Phase 2: Construct the sketch}
    \State $<\tilde{v}^{(i)}, j_i>\leftarrow$ LDPLCMclient($d_i$, $phase=2$)
    \State Send $<\tilde{v}^{(i)}, j_i>$ to Server
\EndFor
\State Server receives $DataInput\leftarrow$ $\{<\tilde{v}^{(1)}, j_1>$,  $<\tilde{v}^{(2)}, j_2>$,..., $<\tilde{v}^{(n)}, j_{n}>\}$
\State Reinitialize $M\in \{0\}^{k\times m}$
\State Get sketch $M\leftarrow$ LDPLCMserver-construction($DataInput$, $phase=2$)
\State \textbf{return: $M$, $g$, $P$}
\end{algorithmic}
\end{algorithm}

The pseudo-code of this framework is shown in Algorithm~\ref{Algorithm:LDPLCM}.
The algorithm constructs the frequency model in phase 1 (line 1-7). It first gets some sample clients (line 1). Each sample client chooses a hash function $h_{j_i}$ to encode its value and sends the perturbed value $\tilde{v}^{(i)}$ along with the index of the chosen hash function to the server (line 2-5). The server trains the frequency model $g$, and computes the frequency boundary $P$ based on the aggregations of the perturbed values from the sample clients.
The algorithm constructs the sketch in phase 2 with the clients other than the samples (line 8-14). The client-side treats high-frequent items and low-frequent items in different ways (line 9), which are introduced later. Finally, the server constructs the sketch with the perturbed values from the clients in phase 2. We will introduce the client-side algorithm LDPLCMclient and the server-side algorithm LDPLCMserver in the remaining part of this section.

\subsection{Client-side algorithm of LDPLCM}

As introduced in the framework, the client-side algorithm has two phases. In phase 1, each sample client takes the same way as $A_{client}\mbox{-}CMS$~\cite{APPLE} to encode and perturb its value. In phase 2, each of the remaining clients identifies whether its value is high-frequent or not according to the frequency model. After that, the clients encode the high-frequent items and low-frequent items in different ways to reduce the estimation errors caused by hash collisions.

The pseudo-code of the client-side algorithm is shown in Algorithm~\ref{Algorithm:LDPLCMclient}.
In phase 1, the algorithm uses the client-side algorithm of Apple-CMS~\cite{APPLE} to get the perturbed values of each sample client (line 1-2). In phase 2, if the item $d$ of the client is predicted as a low-frequent item, the client encodes and perturbs it in the same way as in phase 1 (line 4-5). If $d$ is a high-frequent item, then it is encoded as a vector $\{-1\}^m$ (line 8). The vector $v$ is then perturbed by multiplying each of its bits with (-1) with a probability $\frac{1}{e^{\epsilon/2}+1}$ (line 10).

We give an example to show the difference between the encoded value of a high-frequent item and that of a low-frequent item in phase 2.

\begin{example}
  Suppose there is a high-frequent item $d_{high}$ and a low-frequent item $d_{low}$. The encoded value of $d_{high}$ is undoubtedly [-1,-1,-1,-1,-1,-1].
  Suppose the length of the sketch is $m=6$ and the hash value of $d_{low}$ is $h_{j}(d_{low})=2$, thus the encoded value of $d_{low}$ is [-1,-1,+1,-1,-1,-1].
\end{example}

We can learn from this example that the high-frequent items have no hash collisions with the low-frequent items in this way, since the high frequent items do not actually use the hash functions for encoding. The difference between the encoded value of a high-frequent item and a low-frequent item is at most 1 bit in our algorithm. Such a difference is smaller than that of two encoded values in the Apple-CMS. Since the Apple-CMS satisfies $\epsilon$-LDP, intuitively, our algorithm also satisfies $\epsilon$-LDP. We then formally prove that the client-side algorithm of LDPLCM satisfies $\epsilon$-LDP.

\begin{algorithm}
\caption{LDPLCMclient}\label{Algorithm:LDPLCMclient}
\hspace*{0.02in} {\bf Input: data $d\in D$, $phase$, privacy budget $\epsilon$, sketch parameters ($m$, $k$), frequency model $g$, threshold $P$}\\
\hspace*{0.02in} {\bf Output: perturbed value $\tilde{v}$, hash index $j$}
\begin{algorithmic}[1]
\If{$phase==1$} \Comment{Phase 1}
    \State $\tilde{v}, j\leftarrow$ $A_{client}\mbox{-}CMS$($d$, $\epsilon$, $m$, $k$)~\cite{APPLE}
\Else \Comment{Phase 2}
    \If  {$g(d)<P$} \Comment{$d$ is a low-frequent value}
        \State $\tilde{v}, j\leftarrow$ $A_{client}\mbox{-}CMS$($d$, $\epsilon$, $m$, $k$)~\cite{APPLE}
    \Else \Comment{$d$ is a high-frequent value}
        \State Sample $j$ uniformly at random from $[k]$
        \State Initialize a vector $v\leftarrow \{-1\}^m$
        \For {$i$ from 1 to $m$} \Comment{Perturb $v$}
            \State $\tilde{v}[i]\leftarrow v[i]\cdot(-1)$ with probability $\frac{1}{e^{\epsilon/2}+1}$
            \State $\tilde{v}[i]\leftarrow v[i]\cdot(+1)$ with probability $\frac{e^{\epsilon/2}}{e^{\epsilon/2}+1}$
        \EndFor
    \EndIf
\EndIf
\State \textbf{return: $\tilde{v}$, $j$}
\end{algorithmic}
\end{algorithm}

\begin{theorem}
  The algorithm LDPLCMclient satisfies $\epsilon$-LDP.
\end{theorem}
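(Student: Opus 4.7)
My plan is to verify the $\epsilon$-LDP definition directly by case analysis on the parameter $phase$ and on where each of the two candidate inputs $v_1, v_2 \in D$ falls relative to the boundary $P$. The routine cases will either follow from the fact that $A_{client}\mbox{-}CMS$ is already shown to satisfy $\epsilon$-LDP in~\cite{APPLE}, or from the observation that the two output distributions are literally identical; only a single mixed sub-case requires a direct probability calculation.

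For phase 1, lines 1--2 just invoke $A_{client}\mbox{-}CMS$, which gives $\epsilon$-LDP by~\cite{APPLE}, so nothing is left to prove. For phase 2 I would split on the classifier applied to $v_1$ and $v_2$. If $g(v_1), g(v_2) < P$, both inputs again go through $A_{client}\mbox{-}CMS$, so the bound is inherited. If $g(v_1), g(v_2) \ge P$, both inputs are deterministically encoded as $\{-1\}^m$ before the identical bitwise perturbation and the identical uniform draw of $j$ from $[k]$; hence the two output distributions coincide and the ratio equals $1$. Only the mixed case, where (say) $g(v_1) < P$ and $g(v_2) \ge P$, needs genuine work.

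For the mixed case I would fix an arbitrary output $(\tilde{v}, j)$ and condition on $j$; because $j$ is uniform on $[k]$ in both branches independently of $d$, the $j$-marginal cancels in the ratio $\Pr[A(v_1)=(\tilde v,j)]/\Pr[A(v_2)=(\tilde v,j)]$. Under $v_1$ the pre-perturbation vector equals $-1$ everywhere except a single $+1$ at position $h_j(v_1)$, while under $v_2$ it is $-1$ everywhere, so their Hamming distance is exactly $1$. Factorising $\Pr[\tilde v\mid v]$ over the $m$ independent bit flips (each performed with probability $p=\frac{1}{e^{\epsilon/2}+1}$), every coordinate other than $h_j(v_1)$ contributes a factor $1$ to the ratio, and the coordinate at $h_j(v_1)$ contributes at most $\frac{1-p}{p}=e^{\epsilon/2}\le e^{\epsilon}$. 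Summing (or integrating) over $(\tilde v,j)\in T$ then extends the pointwise bound to arbitrary output sets $T$, completing the verification of Definition~\ref{Def:LDP}.

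The step I expect to be the main obstacle is precisely the bookkeeping in this mixed case: I need to check carefully that the uniform sampling of $j$ in the high-frequent branch of Algorithm~\ref{Algorithm:LDPLCMclient} matches the uniform hash-index sampling performed inside $A_{client}\mbox{-}CMS$, so that $j$ can be conditioned away cleanly, and that the single-bit difference in the pre-perturbation vectors is genuinely all that needs to be tracked. Once that alignment is confirmed, the remaining step is a standard one-line randomized-response calculation. It is worth flagging that the bound obtained in this sub-case is $e^{\epsilon/2}$, strictly tighter than the $e^{\epsilon}$ worst case of Apple-CMS (which arises only when two low-frequent items hash to different positions and thus differ in two bits); this asymmetry is a useful sanity check but does not weaken the claimed $\epsilon$-LDP guarantee.
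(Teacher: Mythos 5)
Your proposal is correct and follows essentially the same route as the paper: reduce to phase 2, observe that a high-frequent encoding $\{-1\}^m$ and a low-frequent one-hot encoding differ in exactly one bit, and bound the resulting randomized-response ratio by $e^{\epsilon/2}\le e^{\epsilon}$, with the remaining cases inherited from $A_{client}\mbox{-}CMS$ or trivially identical. Your case analysis is somewhat more explicit than the paper's (which only treats the mixed case after noting all high-frequent encodings coincide), but the key calculation is the same.
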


\begin{proof}
  Since the algorithm $A_{client}\mbox{-}CMS$ in reference~\cite{APPLE} satisfies $\epsilon$-LDP, we just need to prove that the phase 2 of Algorithm~\ref{Algorithm:LDPLCMclient} satisfies $\epsilon$-LDP. As the input of all the high-frequent items are the same, we only need to prove that the perturbed results of a high-frequent item $d$ and a low-frequent item $d'$ are sufficiently similar.
  We suppose the encode of $d$ and $d'$ are $v$ and $v'$, respectively. According to the algorithm, each bit of $v$ is -1, that is, the only difference between $v$ and $v'$ is on the $h_j(d')$-th bit, i.e., $v[h_j(d')]=-1$ and $v'[h_j(d')]=1$. We denote Algorithm~\ref{Algorithm:LDPLCMclient} by $A$ in the following proof.
  \begin{equation}\label{Equation:LDP-Proof1}
  \frac{Pr[A(d)=(\tilde{v},j)]}{Pr[A(d')=(\tilde{v},j)]}=\frac{Pr[perturb(v[h_j(d')])=\tilde{v}[h_j(d')]]}{Pr[perturb(v'[h_j(d')])=\tilde{v}[h_j(d')]]}
  =\frac{Pr[perturb(-1)=\tilde{v}[h_j(d')]]}{Pr[perturb(1)=\tilde{v}[h_j(d')]]}
  \end{equation}
  Since the algorithm perturbs $v[i]$ by multiplying (-1) and $v[i]$ with a probability  $\frac{1}{e^{\epsilon/2}+1}$,
\begin{equation}\label{Equantion:LDP-Proof2}
  e^{-\epsilon} < e^{-\epsilon/2}\le\frac{Pr[perturb(-1)=\tilde{v}[h_j(d')]]}{Pr[perturb(1)=\tilde{v}[h_j(d')]]}\le e^{\epsilon/2} < e^{\epsilon}
\end{equation}
Thus, the Algorithm~\ref{Algorithm:LDPLCMclient} satisfies $\epsilon$-LDP.
\end{proof}

\subsection{Server-side algorithm of LDPLCM}

The server-side of LDPLCM also has two phases. The main point of phase 1 is to build a frequency model based on the aggregations of the values from the sample clients. Since the frequency estimation based on the sampling are highly accurate for the high-frequent items, we use the frequency model to replace the storage of these high-frequent items. We leave the task of approximately storing and estimating the frequencies of low-frequent items to the sketch in phase 2.

The pseudo-code of the server-side algorithm is shown in Algorithm~\ref{Algorithm:LDPLCMserver-construction}. In phase 1, the server constructs the sketch with the perturbed inputs from the sample clients (line 2). Then, the algorithm randomly chooses $t$ sample items from the data domain (line 3). We use the sketch $M$ to estimate the frequency of each item $d_i$ according to the server-side algorithm of Apple-CMS (line 5). Since the server in phase 1 only gets information from the sample clients, the frequency of each item in each client should be scaled with $\frac{1}{r}$, where $r$ is the sampling rate of phase 1. The algorithm then trains the frequency model $g$ by mapping each $d_i$ to its estimated frequency $\hat{f}(d_i)$ (line 7). We calculate the predictions of each item $d_i$ and sort the predictions in descending order (line 8). We compute the frequency boundary $P$ separating the high-frequent items and low-frequent items by accumulating the frequencies of high-frequent items until it reaches $\theta$ times the sum of all the data (line 9-10). In phase 2, the algorithm constructs the sketch $M$ with the perturbed inputs of the clients other than the samples (line 12).

\begin{algorithm}
\caption{LDPLCMserver}\label{Algorithm:LDPLCMserver-construction}
\hspace*{0.02in} {\bf Input: data $d$, privacy budget $\epsilon$, sketch parameters ($m$, $k$), ratio of high-frequent items $\theta$, sampling rate $r$}\\
\hspace*{0.02in} {\bf Output: sketch $M$, frequency boundary $P$, frequency model $g$}
\begin{algorithmic}[1]
\If {$phase==1$} \Comment{Phase 1}
    \State Construct sketch $M$ with inputs from clients.
    %\State $P,g \leftarrow FrequencyModel(M, \theta)$
    \State $\{d_1, d_2,..., d_t\}\leftarrow$ samples randomly chosen from the data domain.
    \For {$i$ from 1 to $t$}
        \State $\hat{f}(d_i)\leftarrow A_{server}\mbox{-}CMS(d_i)/r$
    \EndFor
    \State Train frequency model $g: d_i\rightarrow \hat{f}(d_i)$. \Comment{Frequency Model $g$}
    \State $SortedFre\leftarrow Sort_{i\in [1,t]}\{g(d_i)\}$
    \State $id\leftarrow \max\{p| \sum_{i=1}^{p}SortedFre[i]\le \theta\cdot\sum_{i=1}^{t}SortedFre[i]\}$ \Comment{Frequency Boundary $P$}
    \State $P\leftarrow SortedFre[id]$
\ElsIf {$phase==2$} \Comment{Phase 2}
    \State Construct sketch $M$ with inputs from clients.
\EndIf
\end{algorithmic}
\end{algorithm}

%\begin{algorithm}
%\caption{LDPLCMserver}\label{Algorithm:LDPLCMserver-construction}
%\hspace*{0.02in} {\bf Input:data $d$, $phase$, privacy budget $\epsilon$, sketch parameters ($m$, $k$), ratio of high-frequent items $\theta$, sampling rate $r$}\\
%\hspace*{0.02in} {\bf Output: sketch $M$, Frequency Boundary $P$, Frequency Model $g$}
%\begin{algorithmic}[1]
%\If {$phase==1$} \Comment{Phase 1}
%    \State Construct sketch $M$ with inputs from clients.
%    %\State $P,g \leftarrow FrequencyModel(M, \theta)$
%    \State $\{d_1, d_2,..., d_t\}\leftarrow$ samples randomly chosen from the data domain.
%    \For {$i$ from 1 to $t$}
%        \State $\hat{f}(d_i)\leftarrow A_{server}\mbox{-}CMS(d_i)/r$
%    \EndFor
%    \State $SortedFre\leftarrow Sort_{i\in [1,t]}\{\hat{f}(d_i)\}$
%    \State $P\leftarrow Max\{SortedFre[p]|\theta \sum_{i=1}^{p}SortedFre[i]\le \sum_{i=1}^{n}SortedFre[i]\}$ \Comment{Frequency Boundary $P$}
%    \State Train frequency model $g: d_i\rightarrow \hat{f}(d_i)$. \Comment{Frequency Model $g$}
%\ElsIf {$phase==2$} \Comment{Phase 2}
%    \State Construct sketch $M$ with inputs from clients.
%\EndIf
%\end{algorithmic}
%\end{algorithm}

After these two phases, we get a model predicting the frequencies for the high-frequent items and a sketch approximately estimating the frequencies of the low-frequent items. The LDPLCM estimates the frequency of high-frequent items and low-frequent items in different ways. The high-frequent items are predicted according to the model, and the low-frequent items are estimated based on the sketch. It is worth noting that we need to eliminate the error caused by high-frequent items from the sketch-based estimation since the LDPLCM involves some dummy values of high-frequent items into the sketch to avoid privacy leaks. As each bit of the dummy values is flipped with the same probability and the line to insert a perturbed dummy value is chosen randomly, the perturbed dummy values almost identically influent each cell of the sketch. In addition, we can evaluate the impact of involving all these dummy values of the high-frequent items on each cell of the sketch, since the total frequencies of high-frequent items can be computed by $\theta\cdot n$, where $\theta$ is the ratio of the total frequencies of high-frequent items to the total frequencies of all the items. At such, we can accurately evaluate these errors and eliminate them from the estimation.

The pseudo-code of our estimator is shown in Algorithm~\ref{Algorithm:LDPLCM-estimate}. If the model-based prediction $g(d)$ exceeds the boundary $P$, then the value $d$ is judged to be a high-frequent value. Thus, its estimated frequency is the prediction $g(d)$ (line 2). Otherwise, the algorithm estimates the frequency of a low-frequent value according to the sketch $M$ (line 4).

\begin{algorithm}
\caption{LDPLCM-estimator}\label{Algorithm:LDPLCM-estimate}
\hspace*{0.02in} {\bf Input: data $d$, sketch parameters ($m$, $k$), frequency model $g$, boundary $P$, ratio of high-frequent items $\theta$}\\
\hspace*{0.02in} {\bf Output: frequency estimation $\hat{f}(d)$}
\begin{algorithmic}[1]
    \If {$g(d)>P$}
        \State $\hat{f}(d)=g(d)$
    \Else
         \State $\hat{f}(d)=\frac{m}{m-1}(\frac{1}{k}\sum_{l=1}^{k}M_{l,h_l(d)}-(1-\theta)\frac{n}{m})$
    \EndIf
\State \textbf{return: $\hat{f}(d)$}
\end{algorithmic}
\end{algorithm}

The following theorem proves that the output of our LDPLCM-estimator is an unbiased estimate of the frequency for a low-frequent item.

\begin{theorem}
  The estimated frequency of a low-frequent item provided by the LDPLCM is a unbiased estimate of $f(d)$, i.e., $\mathbb{E}[\frac{m}{m-1}(\frac{1}{k}\sum_{l=1}^{k}M_{l,h_l(d)}-(1-\theta)\frac{n}{m})]=f(d)$.
\end{theorem}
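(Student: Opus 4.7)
The plan is to prove the per-row identity $\mathbb{E}[M_{l,h_l(d)}] = \frac{m-1}{m}\,f(d) + (1-\theta)\frac{n}{m}$ for every $l\in[k]$; once this is in hand, averaging over $l$, subtracting $(1-\theta)n/m$, and multiplying by $m/(m-1)$ immediately yield the unbiasedness claim by linearity of expectation.

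First I would express $M_{l,h_l(d)}$ as a sum of contributions from the $n$ phase-2 clients. Because Algorithm~\ref{Algorithm:LDPLCMserver-construction} inherits the Apple-CMS server-side aggregation rule, a per-bit computation using the flip probabilities of Algorithm~\ref{Algorithm:LDPLCMclient} establishes the pivotal fact that the expected contribution of client $c$'s report to the sketch cell $(l,i)$ equals $k$ when the encoded bit $v^{(c)}[i]=+1$ and $0$ when $v^{(c)}[i]=-1$: the randomization and the Apple-CMS debiasing coefficient exactly cancel, so only the deterministic encoding matters in expectation. Combined with the uniform row choice $\Pr[j_c=l]=1/k$, each client's expected contribution to $M_{l,h_l(d)}$ is therefore $\Pr[v^{(c)}[h_l(d)]=+1 \mid j_c=l]$.

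Next I would partition the $n$ phase-2 clients into three disjoint classes: (H) the $\theta n$ high-frequent clients, which Algorithm~\ref{Algorithm:LDPLCMclient} encodes as $v^{(c)}=(-1)^m$; (L$_d$) the $f(d)$ low-frequent clients whose datum is $d$ itself; and (L$_{\neq d}$) the remaining $(1-\theta)n - f(d)$ low-frequent clients. Applying the pivotal identity class by class, a class-(H) client contributes $0$ (every bit of its encoding is $-1$), a class-(L$_d$) client contributes $1$ (since $h_l(d_c)=h_l(d)$ makes the bit $+1$), and a class-(L$_{\neq d}$) client contributes $1/m$ via the hash-collision probability $\Pr[h_l(d_c)=h_l(d)]=1/m$. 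Summing these yields $f(d) + \frac{(1-\theta)n-f(d)}{m}$, which rearranges to the required $\frac{m-1}{m}\,f(d)+(1-\theta)\frac{n}{m}$.

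The one delicate point is the vanishing of the class-(H) expected contribution: this is precisely what licenses the correction $(1-\theta)\frac{n}{m}$ in place of the $n/m$ used by plain Apple-CMS, and it is what makes the uniform $(-1)^m$ high-frequent encoding compatible with unbiased low-frequency estimation. I expect this to be the main conceptual obstacle, but once the pivotal per-bit identity (itself a one-line check using $\frac{e^{\epsilon/2}}{e^{\epsilon/2}+1}$ and $\frac{1}{e^{\epsilon/2}+1}$) is granted, the rest of the argument is just bookkeeping over the three client classes and a final rearrangement.
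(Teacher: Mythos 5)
Your proposal is correct and follows essentially the same route as the paper: it rests on the same per-client, three-case decomposition (high-frequent, low-frequent equal to $d$, low-frequent distinct from $d$), the same per-bit cancellation between the flip probabilities and the debiasing coefficient $c_\epsilon=\frac{e^{\epsilon/2}+1}{e^{\epsilon/2}-1}$ giving expected contributions $0$, $1$, and $1/m$ respectively, and the same final bookkeeping $0\cdot\theta n + f(d) + \frac{1}{m}[(1-\theta)n - f(d)]$. The only cosmetic difference is that you establish the identity per row and then average over $l$, whereas the paper sums over rows and clients jointly with an indicator $\mathbbm{1}\{J=j\}$; the substance is identical, and your emphasis on the vanishing high-frequent contribution as the justification for the $(1-\theta)\frac{n}{m}$ correction matches the paper's reasoning.
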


\begin{proof}
The LDPLCM estimates the frequency of a low-frequent item $d$ as $\hat{f}(d)$. We first analyze the contribution of each data entry $d^{(i)}$ to the estimation $\hat{f}(d)$.
  Let $Perturb(v^{(i)})=(-1)\cdot v^{(i)}$ with probability $\frac{e^{\epsilon/2}}{1+e^{\epsilon/2}}$ and $J\sim Uniform[k]$. The encoding vector of a high-frequent item is $v^{(i)}\in\{-1\}^m$, and the encoding vector of a low-frequent item is $-1$ everywhere except at position $h_j(d^{(i)})$ for record $i$.
  We use $M(j,h_j(d))^{(i)}$ to denote the contribution of the $i$th data to the $j$ line and $h_j(d)$ column of the sketch $M$.
  \begin{equation}\label{Equation:UnbiasedEst}
    M(j,h_j(d))^{(i)}=k(\frac{c_\epsilon \mathrm{Perturb}( v^{(i)}[h_j(d)])+1}{2})\mathbbm{1}\{J=j\},
  \end{equation}
  where $c_\epsilon=\frac{e^{\epsilon/2}+1}{e^{\epsilon/2}-1}$, and  $M(j,h_j(d))^{(i)}$ is nonzero only when $J=j$.

  Then, we analyze the expectation of $\mathbb{E}[ M(j,h_j(d))^{(i)}]$ under different conditions of the $i$th entry $d^{(i)}$.

  (1) If $d^{(i)}$ is a high-frequent item,
  \begin{align}\label{Equation:UnbiasedCondition1}
    \mathbb{E}[ M(j,h_j(d))^{(i)}] &= k(\frac{c_\epsilon \mathrm{Perturb}( v^{(i)}[h_j(d)])+1}{2})\mathrm{Pr}\{J=j\}\\
    &=\mathbb{E}[\frac{c_\epsilon \mathrm{Perturb}(-1)+1}{2}]=0
  \end{align}
  (2) $d^{(i)}$ is a low-frequent item, and $d^{(i)}=d$,
  \begin{align}\label{Equation:UnbiasedCondition2}
    \mathbb{E}[ M(j,h_j(d))^{(i)}] &= k(\frac{c_\epsilon \mathrm{Perturb}( v^{(i)}[h_j(d)])+1}{2})\mathrm{Pr}\{J=j\}=1\\
  \end{align}
  (3) $d^{(i)}$ is a low-frequent item, and $d^{(i)}\neq d$
  \begin{align}\label{Equation:UnbiasedCondition3}
    \mathbb{E}[ M(j,h_j(d))^{(i)}] &= k(\frac{c_\epsilon \mathrm{Perturb}( v^{(i)}[h_j(d)])+1}{2})\mathrm{Pr}\{J=j\}\\
    &=(1-\frac{1}{m})\mathbb{E}[\frac{c_\epsilon \mathrm{Perturb}(-1)+1}{2}]+\frac{1}{m}\mathbb{E}[\frac{c_\epsilon \mathrm{Perturb}(+1)+1}{2}]\\
    &=(1-\frac{1}{m})\cdot 0+\frac{1}{m}\cdot 1=\frac{1}{m}
  \end{align}

  Thus, $\mathbb{E}[ M(j,h_j(d))^{(i)}]=0\cdot \mathbbm{1}\{d_{high}^{(i)} \} + 1\cdot \mathbbm{1}\{d_{low}^{(i)}=d \}+ \frac{1}{m}\cdot \mathbbm{1}\{d_{low}^{(i)}\neq d\}$, where $d_{high}^{(i)}$ means a high-frequent item, and $d_{low}^{(i)}$ means a low-frequent item.
\begin{align}\label{Equation}
  \mathbb{E}[\frac{1}{k}\sum_{i=1}^{n}\sum_{j=1}^{k}M(j,h_j(d))^{(i)}]&=
  0\cdot n\cdot \theta + 1\cdot f(d) + \frac{1}{m}\cdot [n(1-\theta)-f(d)] \\
  &=(1-\frac{1}{m})f(d)+(1-\theta)\frac{n}{m}
\end{align}
The expectation of the LDPLCM-based estimation:

\begin{equation}\label{Equation:UnbiasedResult}
  \mathbb{E}[\hat{f}(d)]=\mathbb{E}[\frac{m}{m-1}(\frac{1}{k}\sum_{i=1}^{n}\sum_{j=1}^{k}M(j,h_j(d))^{(i)}-(1-\theta)\frac{n}{m})]=f(d)
\end{equation}

Thus, the output of LDPLCM $\hat{f}(d)$ is a unbiased estimate of $f(d)$.
\end{proof}

To ensure the utility of LDPLCM-based frequency estimation, we prove that the variance of the estimation is limited. Before computing the variance, we first prove some corresponding lemmas.

Lemma~\ref{Lemma:ExpectationSquaredX} computes the expectation of the squared entry of the $i$th data to the $j$ line and $h_j(d)$ column of the sketch $M$. This lemma will be used to compute the variance of LDPLCM-based estimation.

\begin{lemma}\label{Lemma:ExpectationSquaredX}
$\mathbb{E}[( M(j,h_j(d))^{(i)})^2]=\frac{k}{4}(c^2_\epsilon-1)+k\cdot\mathbbm{1}\{d_{low}^{(i)}=d \}+ \frac{k}{m}\cdot \mathbbm{1}\{d_{low}^{(i)}\neq d\}$
\end{lemma}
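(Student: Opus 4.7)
The plan is to carry out the same case analysis already used in the proof of the expectation formula, but applied to the squared quantity. First I would simplify the square: since $\mathbbm{1}\{J=j\}^2=\mathbbm{1}\{J=j\}$ and $\Pr\{J=j\}=1/k$, we obtain
\begin{equation*}
\mathbb{E}[(M(j,h_j(d))^{(i)})^2]=k\,\mathbb{E}\!\left[\left(\tfrac{c_\epsilon\mathrm{Perturb}(v^{(i)}[h_j(d)])+1}{2}\right)^{\!2}\right],
\end{equation*}
where the outer expectation has conditioned on $J=j$ and absorbed the factor of $1/k$.

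Next I would compute the bracket as a function of the unperturbed bit $b=v^{(i)}[h_j(d)]\in\{-1,+1\}$. Since $(\mathrm{Perturb}(b))^2=1$ and a direct computation from the perturbation probabilities gives the key identity
\begin{equation*}
\mathbb{E}[\mathrm{Perturb}(b)]=b\cdot\tfrac{e^{\epsilon/2}-1}{e^{\epsilon/2}+1}=\tfrac{b}{c_\epsilon},
\end{equation*}
expanding the square yields
\begin{equation*}
\mathbb{E}\!\left[\left(\tfrac{c_\epsilon\mathrm{Perturb}(b)+1}{2}\right)^{\!2}\right]=\tfrac{1}{4}(c_\epsilon^2+2b+1).
\end{equation*}
So the value equals $\tfrac14(c_\epsilon^2-1)$ when $b=-1$ and $\tfrac14(c_\epsilon^2+3)$ when $b=+1$.

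Then I would split into the three cases used earlier. In case (1), $d^{(i)}$ is high-frequent so $v^{(i)}$ is identically $-1$ and the bit is $-1$ with probability one, giving $\tfrac{k}{4}(c_\epsilon^2-1)$. In case (2), $d^{(i)}=d$ is low-frequent and the encoding places $+1$ exactly at position $h_j(d)$, giving $\tfrac{k}{4}(c_\epsilon^2+3)=\tfrac{k}{4}(c_\epsilon^2-1)+k$. In case (3), $d^{(i)}\neq d$ is low-frequent and a hash collision $h_j(d^{(i)})=h_j(d)$ occurs with probability $1/m$; mixing the two bit-values with weights $1/m$ and $1-1/m$ yields
\begin{equation*}
k\left[\tfrac{1}{m}\cdot\tfrac{1}{4}(c_\epsilon^2+3)+\left(1-\tfrac{1}{m}\right)\tfrac{1}{4}(c_\epsilon^2-1)\right]=\tfrac{k}{4}(c_\epsilon^2-1)+\tfrac{k}{m}.
\end{equation*}
Collecting the three cases with indicators reproduces exactly the claimed formula.

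I do not anticipate a serious obstacle; the computation is essentially a second-moment variant of the already established first-moment analysis. The only step requiring care is the algebraic expansion $\mathbb{E}[(c_\epsilon\mathrm{Perturb}(b)+1)^2]=c_\epsilon^2+2b+1$, where one must remember to use $\mathbb{E}[\mathrm{Perturb}(b)]=b/c_\epsilon$ rather than $0$ (the perturbation is biased towards $b$), and the constant $c_\epsilon^2$ emerging from $\mathbb{E}[\mathrm{Perturb}(b)^2]=1$ is what produces the shared term $\tfrac{k}{4}(c_\epsilon^2-1)$ across all three cases.
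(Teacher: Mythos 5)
Your proposal is correct and follows essentially the same route as the paper: the same three-case decomposition (high-frequent; low-frequent equal to $d$; low-frequent different from $d$), with the factor $k$ coming from $\Pr\{J=j\}=1/k$ and the collision probability $1/m$ handled identically. The only difference is cosmetic --- you factor the per-case computation through the single identity $\mathbb{E}[(c_\epsilon\mathrm{Perturb}(b)+1)^2]=c_\epsilon^2+2b+1$ (using $\mathbb{E}[\mathrm{Perturb}(b)]=b/c_\epsilon$), whereas the paper substitutes the perturbation probabilities directly in each case; your version is arguably cleaner and sidesteps the paper's typographical slip of a missing square in its case~(1) expansion.
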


\begin{proof}
  (1) If $d^{(i)}$ is a high-frequent item,
  \begin{align}\label{Equation:LemmaCondition1}
    \mathbb{E}[ (M(j,h_j(d))^{(i)})^2] &= k^2\mathbb{E}[ (\frac{c_\epsilon \mathrm{Perturb}( v^{(i)}[h_j(d)])+1}{2})^2\mathbbm{1}\{J=j\}]\\
    &=k\mathbb{E}[(\frac{c_\epsilon \mathrm{Perturb}(-1)+1}{2})^2]\\
    &=k((\frac{-c_\epsilon+1}{2})^2(\frac{e^{\epsilon/2}}{1+e^{\epsilon/2}})+(\frac{c_\epsilon+1}{2})(\frac{1}{1+e^{\epsilon/2}}))\\
    &=\frac{k}{4}({c_\epsilon}^2-1)
  \end{align}
  (2) $d^{(i)}$ is a low-frequent item, and $d^{(i)}=d$,
  \begin{align}\label{Equation:LemmaCondition2}
    \mathbb{E}[ (M(j,h_j(d))^{(i)})^2] &= k^2\mathbb{E}[ (\frac{c_\epsilon \mathrm{Perturb}( v^{(i)}[h_j(d)])+1}{2})^2\mathbbm{1}\{J=j\}]\\
    &=k\mathbb{E}[(\frac{c_\epsilon \mathrm{Perturb}(+1)+1}{2})^2]\\
    &=k((\frac{c_\epsilon+1}{2})^2(\frac{e^{\epsilon/2}}{1+e^{\epsilon/2}})+(\frac{-c_\epsilon+1}{2})(\frac{1}{1+e^{\epsilon/2}}))\\
    &=\frac{k}{4}({c_\epsilon}^2-1)+k
  \end{align}
  (3) $d^{(i)}$ is a low-frequent item, and $d^{(i)}\neq d$
  \begin{align}\label{Equation:LemmaCondition3}
    \mathbb{E}[ (M(j,h_j(d))^{(i)})^2] &= k^2\mathbb{E}[ (\frac{c_\epsilon \mathrm{Perturb}( v^{(i)}[h_j(d)])+1}{2})^2\mathbbm{1}\{J=j\}]\\
    &=\frac{k}{m}\mathbb{E}[(\frac{c_\epsilon \mathrm{Perturb}(+1)+1}{2})^2]+k(1-\frac{1}{m})\mathbb{E}[(\frac{c_\epsilon \mathrm{Perturb}(-1)+1}{2})^2]\\
    &=\frac{k}{4}({c_\epsilon}^2-1)+\frac{k}{m}
  \end{align}

  Thus, $\mathbb{E}[( M(j,h_j(d))^{(i)})^2]=\frac{k}{4}(c^2_\epsilon-1)+0\cdot\mathbbm{1}\{d_{high}^{(i)}\} +k\cdot\mathbbm{1}\{d_{low}^{(i)}=d \}+ \frac{k}{m}\cdot \mathbbm{1}\{d_{low}^{(i)}\neq d\}$, where $d_{high}^{(i)}$ means a high-frequent item, and $d_{low}^{(i)}$ means a low-frequent item.
\end{proof}

Lemma~\ref{Lemma:Covariance} computes the covariance of the entries $d^{(i_1)}$, $d^{(i_2)}$ to the sketch $M$ in different cases.

\begin{lemma}\label{Lemma:Covariance}
Let $i_1\neq i_2$ be different indices. We have:

(1)If $j_1\neq j_2$, then
\begin{equation}
\mathrm{Cov}(M(j_1,h_{j_1}(d))^{(i_1)}, M(j_2,h_{j_2}(d))^{(i_2)})=0.
\end{equation}

(2)If $d^{(i_1)}$ and $d^{(i_2)}$ are low-frequent values, $d^{(i_1)}=d$ or $d^{(i_2)}=d$ or $d^{(i_1)}\neq d^{(i_2)}$, then
\begin{equation}
\mathrm{Cov}(M(j,h_j(d))^{(i_1)}, M(j,h_j(d))^{(i_2)})=0.
\end{equation}

(3)If $d^{(i_1)}$ and $d^{(i_2)}$ are low-frequent values, $d^{(i_1)}, d^{(i_2)}=d^*$ and $d^*\neq d$, then
\begin{equation}
\mathrm{Cov}(M(j,h_j(d))^{(i_1)}, M(j,h_j(d))^{(i_2)})=\frac{1}{m}-\frac{1}{m^2}.
\end{equation}

(4)If $d^{(i_1)}$ or $d^{(i_2)}$ is a high-frequent value, then
\begin{equation}
\mathrm{Cov}(M(j,h_j(d))^{(i_1)}, M(j,h_j(d))^{(i_2)})=0.
\end{equation}
\end{lemma}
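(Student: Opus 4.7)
The plan is to compute each covariance as $\mathrm{Cov}(X,Y) = \mathbb{E}[XY] - \mathbb{E}[X]\mathbb{E}[Y]$, using the per-client expectations already established inside the proof of Theorem 2 and evaluating $\mathbb{E}[XY]$ by exploiting the independence structure between the two clients. Three ingredients drive everything: (i) the perturbation bits of clients $i_1$ and $i_2$ are independent, since each client runs LDPLCMclient locally with fresh randomness; (ii) the hash-index draws $J_{i_1}, J_{i_2}$ are i.i.d.\ $\mathrm{Uniform}[k]$ and independent of every perturbation; (iii) the only shared randomness is the hash family $h_1, \dots, h_k$ applied to fixed data values, with independent hash functions across rows and (at least) pairwise independence within each row.

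For Case 1, I would factor the product and observe that $\mathbbm{1}\{J_{i_1}=j_1\}$ and $\mathbbm{1}\{J_{i_2}=j_2\}$ are independent, and that the block $\frac{c_\epsilon \mathrm{Perturb}(v^{(i_1)}[h_{j_1}(d)])+1}{2}$ depends only on $h_{j_1}$ and client $i_1$'s perturbation, which is disjoint from the analogous block for client $i_2$ based on $h_{j_2}$. Hence $\mathbb{E}[XY] = \mathbb{E}[X]\mathbb{E}[Y]$ and the covariance vanishes. Case 4 is handled the same way: a high-frequent client has encoding $v^{(i)}\equiv -1$, so its contribution is independent of any other client and, by the computation already in Theorem 2, has mean $\mathbb{E}[(c_\epsilon\mathrm{Perturb}(-1)+1)/2]=0$, forcing both $\mathbb{E}[XY]$ and $\mathbb{E}[X]\mathbb{E}[Y]$ to be zero. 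For Case 2, I would split into the three listed subcases; when $d^{(i_\ell)} = d$ the value $v^{(i_\ell)}[h_j(d)]=1$ is deterministic, and when $d^{(i_1)}\neq d^{(i_2)}$ with both distinct from $d$ the pairwise independence of $h_j$ makes the collision events $\{h_j(d^{(i_\ell)})=h_j(d)\}$ independent across $\ell$. In every subcase, once the collision indicators decouple, the perturbation expectations factor, yielding $\mathbb{E}[XY]=\mathbb{E}[X]\mathbb{E}[Y]$.

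The main obstacle is Case 3, the only regime with a genuine dependence: when both clients hold the same low-frequent value $d^*\neq d$, the bits $v^{(i_1)}[h_j(d)]$ and $v^{(i_2)}[h_j(d)]$ are literally the \emph{same} random variable, determined solely by whether $h_j(d^*)=h_j(d)$. My plan is to introduce $B := \mathbbm{1}\{h_j(d^*)=h_j(d)\}$ with $\Pr[B=1]=1/m$, condition on $B$, and use that, given $B$, the two clients' perturbations are conditionally independent with conditional means $1$ (if $B=1$) or $0$ (if $B=0$), reusing the elementary evaluations of $\mathbb{E}[(c_\epsilon\mathrm{Perturb}(\pm 1)+1)/2]$ already computed in Theorem 2. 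Factoring $\Pr[J_{i_1}=j]\Pr[J_{i_2}=j]=1/k^2$ against the $k^2$ prefactor leaves $\mathbb{E}[XY] = \Pr[B=1]\cdot 1 + \Pr[B=0]\cdot 0 = 1/m$; combined with $\mathbb{E}[X]\mathbb{E}[Y]=1/m\cdot 1/m$ this gives the stated $\mathrm{Cov} = 1/m - 1/m^2$.

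The one subtlety worth flagging up front is the independence assumption on the hash family: Case 1 requires the rows to use independent hashes, and the ``$d^{(i_1)}\neq d^{(i_2)}$'' branch of Case 2 requires pairwise independence within a single row. Both are standard for Count-Min style constructions, but should be stated explicitly since the lemma itself is silent on them.
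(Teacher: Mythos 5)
Your proof is correct in substance but follows a genuinely different route from the paper's: the paper proves only case (4) and simply cites the Apple-CMS reference for cases (1)--(3), whereas you reprove all four cases from first principles. For case (4) the two arguments essentially coincide --- the paper expands $\mathbb{E}[XY]$ and substitutes $v^{(i_1)}[h_j(d)]=-1$ together with $\mathbb{E}[M(j,h_j(d))^{(i_1)}]=0$, while you reach the same cancellation by observing that the high-frequent client's contribution is independent of everything else and has mean zero; your version is arguably cleaner. Your treatment of case (3) via the collision indicator $B=\mathbbm{1}\{h_j(d^*)=h_j(d)\}$, with the two perturbations conditionally independent given $B$ and conditional means $1$ or $0$, is exactly the computation the paper leaves implicit in its citation, and it correctly yields $\frac{1}{m}-\frac{1}{m^2}$. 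What your self-contained version buys is transparency: the reader sees precisely why only the ``same low-frequent value $d^*\neq d$'' configuration produces a nonzero covariance.

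One correction: in the $d^{(i_1)}\neq d^{(i_2)}$ branch of case (2) you claim that \emph{pairwise} independence of the hash family within a row suffices to decouple the collision events $\{h_j(d^{(i_1)})=h_j(d)\}$ and $\{h_j(d^{(i_2)})=h_j(d)\}$. It does not: both events involve the common value $h_j(d)$, so their joint probability is $\Pr[h_j(d^{(i_1)})=h_j(d)=h_j(d^{(i_2)})]$, which equals $1/m^2$ only under \emph{3-wise} independence --- the assumption the Apple-CMS construction actually makes. Under merely pairwise independence the covariance in that branch need not vanish. This is a one-word fix rather than a structural flaw, but since you explicitly set out to state the independence assumptions the lemma is silent on, the level of independence should be stated correctly. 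Your parallel remark for case (1) --- that independence of the hash functions across distinct rows is needed --- is accurate.
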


\begin{proof}
The (1),(2), and (3) are proved in Reference~\cite{APPLE}. We just need to prove the (4).

\begin{align}
  &\mathrm{Cov}(M(j,h_j(d))^{(i_1)}, M(j,h_j(d))^{(i_2)})\\
  &=\mathbb{E}[M(j,h_j(d))^{(i_1)}\cdot M(j,h_j(d))^{(i_2)}]-
  \mathbb{E}[M(j,h_j(d))^{(i_1)}]\mathbb{E}[M(j,h_j(d))^{(i_2)}]\\
  &=\frac{1}{4}\{\mathbb{E}[v^{(i_1)}[h_j(d)]\cdot v^{(i_2)}[h_j(d)]]+\mathbb{E}[v^{(i_1)}[h_j(d)]]+\mathbb{E}[v^{(i_2)}[h_j(d)]]+1\}\nonumber\\
  & -\mathbb{E}[M(j,h_j(d))^{(i_1)}]\cdot \mathbb{E}[M(j,h_j(d))^{(i_2)}]
\end{align}
If $d^{(i_1)}$ is a high-frequent value, then $v^{(i_1)}[h_j(d)]=-1$ and $\mathbb{E}[M(j,h_j(d))^{(i_1)}]=0$.

Thus, $\mathrm{Cov}(M(j,h_j(d))^{(i_1)}, M(j,h_j(d))^{(i_2)})=0$.
\end{proof}

Theorem~\ref{Theorem:Variance} uses the above Lemma~\ref{Lemma:ExpectationSquaredX} and Lemma~\ref{Lemma:Covariance} to compute the variance of the LDPLCM-based frequency estimation for the low-frequent items.

\begin{theorem}\label{Theorem:Variance}
    The variance of the LDPLCM-based frequency estimation for a low-frequent item is limited.
    That is $\mathrm{Var}[\frac{1}{k}\sum_{i=1}^{n}\sum_{j=1}^{k}M(j,h_j(d))^{(i)}]
    <\frac{n(c_\epsilon^2-1)}{4}+\frac{n(1-\theta)-f(d)}{m}
    +\frac{1}{km}\sum_{d^*\in D_{low}}f(d^*)^2$.
\end{theorem}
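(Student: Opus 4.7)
The plan is to decompose the variance of $\frac{1}{k}\sum_{i,j}M(j,h_j(d))^{(i)}$ into a sum of per-user variances plus pairwise cross-user covariances, and then invoke Lemma~\ref{Lemma:ExpectationSquaredX} and Lemma~\ref{Lemma:Covariance} case by case. First I would set $Y^{(i)}:=\sum_{j=1}^{k}M(j,h_j(d))^{(i)}$, observing that for fixed $i$ the indicator $\mathbbm{1}\{J^{(i)}=j\}$ makes different row summands disjointly supported, so all intra-user cross-row contributions to $\mathrm{Var}(Y^{(i)})$ vanish. By Lemma~\ref{Lemma:Covariance}(1), cross-user covariances across different rows also vanish, so
\[
\mathrm{Var}\bigl[\tfrac{1}{k}\textstyle\sum_{i}Y^{(i)}\bigr]=\tfrac{1}{k^2}\sum_{i}\mathrm{Var}(Y^{(i)})+\tfrac{1}{k^2}\sum_{i_1\neq i_2}\sum_{j}\mathrm{Cov}\bigl(M(j,h_j(d))^{(i_1)},M(j,h_j(d))^{(i_2)}\bigr).
\]

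Next I would evaluate the diagonal sum by grouping users into the three types classified in Lemma~\ref{Lemma:ExpectationSquaredX}. Combining that lemma with the mean formula from the previous theorem gives $\mathrm{Var}(Y^{(i)})=\tfrac{k^2}{4}(c_\epsilon^2-1)$ for a high-frequent user, $\mathrm{Var}(Y^{(i)})=\tfrac{k^2}{4}(c_\epsilon^2-1)$ for a low-frequent user with $d^{(i)}=d$ (the $k^2$ term in $\mathbb{E}[(Y^{(i)})^2]$ exactly cancels $(\mathbb{E}[Y^{(i)}])^2=k^2$, which is precisely why the final bound carries no $f(d)$ term), and $\mathrm{Var}(Y^{(i)})=\tfrac{k^2}{4}(c_\epsilon^2-1)+\tfrac{k^2}{m}-\tfrac{k^2}{m^2}$ for a low-frequent user with $d^{(i)}\neq d$. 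Aggregating over the $n\theta$ high-frequent users, the $f(d)$ users carrying value $d$, and the remaining $n(1-\theta)-f(d)$ low-frequent users yields $n\cdot\tfrac{k^2}{4}(c_\epsilon^2-1)+\bigl(n(1-\theta)-f(d)\bigr)\bigl(\tfrac{k^2}{m}-\tfrac{k^2}{m^2}\bigr)$ for the diagonal contribution.

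For the off-diagonal part, Lemma~\ref{Lemma:Covariance} leaves only case~(3): pairs $(i_1,i_2)$ with $d^{(i_1)}=d^{(i_2)}=d^*$ for some $d^*\in D_{low}\setminus\{d\}$, each contributing $\tfrac{1}{m}-\tfrac{1}{m^2}$ per row. Summing over rows and over the $f(d^*)(f(d^*)-1)$ ordered pairs for each such $d^*$ gives $k\bigl(\tfrac{1}{m}-\tfrac{1}{m^2}\bigr)\sum_{d^*\in D_{low},\,d^*\neq d}f(d^*)(f(d^*)-1)$. Dividing the combined diagonal and off-diagonal total by $k^2$, then applying the strict bounds $\tfrac{1}{m}-\tfrac{1}{m^2}<\tfrac{1}{m}$ and $f(d^*)(f(d^*)-1)<f(d^*)^2$, and finally enlarging the sum to all of $D_{low}$, produces the stated inequality.

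The main obstacle I expect is the algebraic bookkeeping: correctly counting ordered pairs inside each user-type combination, and making sure the independence structure (independent row choices $J^{(i)}$ and independent perturbations across users, but a hash family shared among all users) aligns with the hypotheses of Lemma~\ref{Lemma:Covariance}. The nontrivial surprise is that low-frequent users with $d^{(i)}=d$ contribute no more marginal variance than high-frequent users; getting that cancellation right is what makes the bound tight enough to drop the $f(d)$ term from the diagonal.
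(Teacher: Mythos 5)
Your proposal is correct and follows essentially the same route as the paper's proof: the same decomposition of the variance into per-user terms plus cross-user covariances, the same invocation of Lemma~\ref{Lemma:ExpectationSquaredX} and Lemma~\ref{Lemma:Covariance}, the same exact intermediate expression $\frac{n(c_\epsilon^2-1)}{4}+\frac{n(1-\theta)-f(d)}{m}\left(1-\frac{1}{m}\right)+\left(\frac{1}{km}-\frac{1}{km^2}\right)\sum_{d^*\neq d}f(d^*)(f(d^*)-1)$, and the same final relaxations to reach the stated bound. One wording nit: the intra-user cross-row covariances do not literally vanish (only the products $M(j,\cdot)^{(i)}M(j',\cdot)^{(i)}$ do, so $\mathrm{Cov}=-\mathbb{E}[M(j,\cdot)^{(i)}]\,\mathbb{E}[M(j',\cdot)^{(i)}]$), but your per-user variances are computed directly as $\mathbb{E}[(Y^{(i)})^2]-(\mathbb{E}[Y^{(i)}])^2$, which is exactly right and is what produces the $k^2$ cancellation you highlight for users holding $d$.
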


\begin{proof}
We compute the variance of the LDPLCM-based estimation for a low-frequent item as follows.
  \begin{align}
    &\mathrm{Var}[\frac{1}{k}\sum_{i=1}^{n}\sum_{j=1}^{k}M(j,h_j(d))^{(i)}]\\
    &=\sum_{i=1}^{n}Var[\frac{1}{k}\sum_{j=1}^{k}M(j,h_j(d))^{(i)}]+\frac{1}{k^2}\sum_{i_1\neq i_2}\mathrm{Cov}(\sum_{j=1}^{k}M(j,h_j(d))^{(i_1)},\sum_{j'=1}^{k}M(j,h_j(d))^{(i_2)})\\
    &=\frac{1}{k^2}\sum_{i=1}^{n}\sum_{j=1}^{k}\mathrm{Var}[M(j,h_j(d))^{(i)}]+\frac{1}{k^2}\sum_{i_1\neq i_2}\sum_{j=1}^{k}\mathrm{Cov}(M(j,h_j(d))^{(i_1)}, M(j,h_j(d))^{(i_2)})
   \end{align}
According to Lemma~\ref{Lemma:Covariance},
   \begin{align}
    &\frac{1}{k^2}\sum_{i_1\neq i_2}\sum_{j=1}^{k}\mathrm{Cov}(M(j,h_j(d))^{(i_1)}, M(j,h_j(d))^{(i_2)})\\
    &=(\frac{1}{km}-\frac{1}{km^2})\sum_{d^*\neq d}\sum_{d_{low}^{(i_1)}=d^*}\sum_{i_1\neq i_2}\mathbbm{1}\{d_{low}^{(i_2)}=d^*\}\\
    &=(\frac{1}{km}-\frac{1}{km^2})\sum_{d^*\neq d}\sum_{d_{low}^{(i)}=d^*}(f(d^*)-1)
   \end{align}
   where $d_{low}^{(i_1)}$ and $d_{low}^{(i_2)}$ are two low-frequent items, and $d_{low}^{(i_1)}=d_{low}^{(i_2)}=d^*$ and $d^*\neq d$.

Thus,
   \begin{align}
    &\mathrm{Var}[\frac{1}{k}\sum_{i=1}^{n}\sum_{j=1}^{k}M(j,h_j(d))^{(i)}]\\
    &=\frac{1}{k^2}\sum_{i=1}^{n}\sum_{j=1}^{k}(\mathbb{E}[(M(j,h_j(d))^{(i)})^2]-(\mathbb{E}(M(j,h_j(d))^{(i)}])^2)+
    (\frac{1}{km}-\frac{1}{km^2})\sum_{d^*\neq d}\sum_{d_{low}^{(i)}=d^*}(f(d^*)-1)\\
    &=\frac{n(c_\epsilon^2-1)}{4}+\frac{n(1-\theta)-f(d)}{m}(1-\frac{1}{m})+
    (\frac{1}{km}-\frac{1}{km^2})\sum_{d^*\neq d}\sum_{d_{low}^{(i)}=d^*}(f(d^*)-1)\\
    &=\frac{n(c_\epsilon^2-1)}{4}+\frac{n(1-\theta)-f(d)}{m}(1-\frac{1}{m}-\frac{1}{k}+\frac{1}{km})
    +(\frac{1}{km}-\frac{1}{km^2})\sum_{d^*\neq d}f(d^*)^2
    \end{align}

Since $d^*$ is the value of a low-frequent item,  $\sum_{d^*\neq d}f(d^*)^2=\sum_{d^*\in D_{low}}f(d^*)^2-f(d)^2$. The variance of LDPLCM is limited by the following equation.
\begin{equation}
  \mathrm{Var}[\frac{1}{k}\sum_{i=1}^{n}\sum_{j=1}^{k}M(j,h_j(d))^{(i)}]<\frac{n(c_\epsilon^2-1)}{4}+\frac{n(1-\theta)-f(d)}{m}+\frac{1}{km}\sum_{d^*\in D_{low}}f(d^*)^2
\end{equation}
 %Since $1-\theta<1$ and $\sum_{d_{low}^*\neq d}f(d_{low}^*)^2<\sum_{d^*\neq d}f(d^*)^2$, the variance of the estimation provided by LDPLCM for a low-frequent item is smaller than that provided by the apple-CMS.
\end{proof}

The variance of Apple-CMS is $\frac{n(c_\epsilon^2-1)}{4}+\frac{n-f(d)}{m}(1-\frac{1}{m}-\frac{1}{k}+\frac{1}{km})
    +(\frac{1}{km}-\frac{1}{km^2})(\sum_{d^*\in D}f(d^*)^2-f(d)^2)$.
According to the Theorem~\ref{Theorem:Variance}, our LDPLCM is more accurate for the low-frequent items compared with the Apple-CMS.

% Experimental section

\section{Experimental Result}

\subsection{Settings}

\noindent \underline{Hardware and Library}

All the experiments are implemented in Python 3.8 and run on Intel(R) Xeon(R) Gold 5218 CPU, with 1TB of MEMORY and 256GB of RAM.

\noindent \underline{Datasets}

We test the performance of frequency estimation algorithms on three datasets, including two synthetic datasets and one real-world dataset.

(1)The first dataset is a synthetic dataset including 10 million records generated from a Zipf distribution with skewness $s=1.1$. The domain size is 2,817,991. We refer to this dataset as Zipf(10M) in the following experiments.

(2)The second dataset is a synthetic dataset including 100 million records generated from a Zipf distribution with skewness $s=1.1$. The domain size is 22,774,443. We refer to this dataset as Zipf(100M) in the following experiments.

(3)The third one is a real-world dataset named Wesad~\cite{schmidt2018introducing} for wearable stress and affect detection. This dataset is a 16GB dataset containing 63 million records. We use the attribute RESPIRATION for the frequency estimation experiments. The domain size is 44,900. We refer to this dataset as WESAD in the following experiments.

\noindent \underline{Error Metrics}

We use metrics to measure the accuracy of different algorithms. In the following metrics, $d$ denotes the size of data domain, $i$ denotes the $i$-th data item, $f(x_{i})$ and $\hat{f}(x_{i})$ represent the true value and the predicted value, respectively.

(1) Mean squared error (MSE): $\frac{1}{d} \sum_{x_{i}\in D} (f(x_{i})- \hat{f}(x_{i}))^{2}$.

(2) Sum squared error (SSE): $ \sum_{1}^{i} (f(x_{i})- \hat{f}(x_{i}))^{2}$.

\noindent \underline{Parameters}

$\theta$: the ratio of the total frequencies of high-frequent items to the frequencies of all items, $\theta=\frac{\sum_{x_{i}\in HighFre}f(x_{i})}{\sum_{x\in D}f(x_{i})}$, where $HighFre$ means the set of all the high-frequent items.

$r$: the sampling rate for the LDPLCM algorithm.

$\epsilon$: the privacy budget of clients in LDPLCM, Apple-CMS, and Apple-HCMS algorithms.

$(m, k)$: the sketch matrix parameters with hash functions $k$ and the domain size $m$ in LDPLCM, Apple-CMS, and Apple-HCMS algorithms.

$k'$: the parameter is to restrict clients to uniformly choosing from $k'$ hash functions in FLH.

\noindent \underline{Frequency Model}

The frequency model should be both lightweight and accurate. On the one hand, it must be small enough to be easily passed between the server and the clients. On the other hand, it should be sufficiently accurate to predict the frequency for the high-frequent items. For the sake of fairness, we compared the accuracy of the Random Forest Regressor model and the Gradient Boosting Regressor model with the same space cost on three datasets. Figure~\ref{Fig:model_compared} shows that the Gradient Boosting Regressor model is more accurate than the Random Forest Regressor model. Thus, we adopt the Gradient Boosting Regressor model to form the frequency model of LDPLCM. The advantage of the Gradient Boosting Regressor model is that it introduces a new weak classifier in each iteration to reduce the residuals of the previously existing classifier combinations, which results in improved fitting and prediction data rather than just using weak learning algorithms.

\begin{figure*}
\centering
    \includegraphics[scale=0.4]{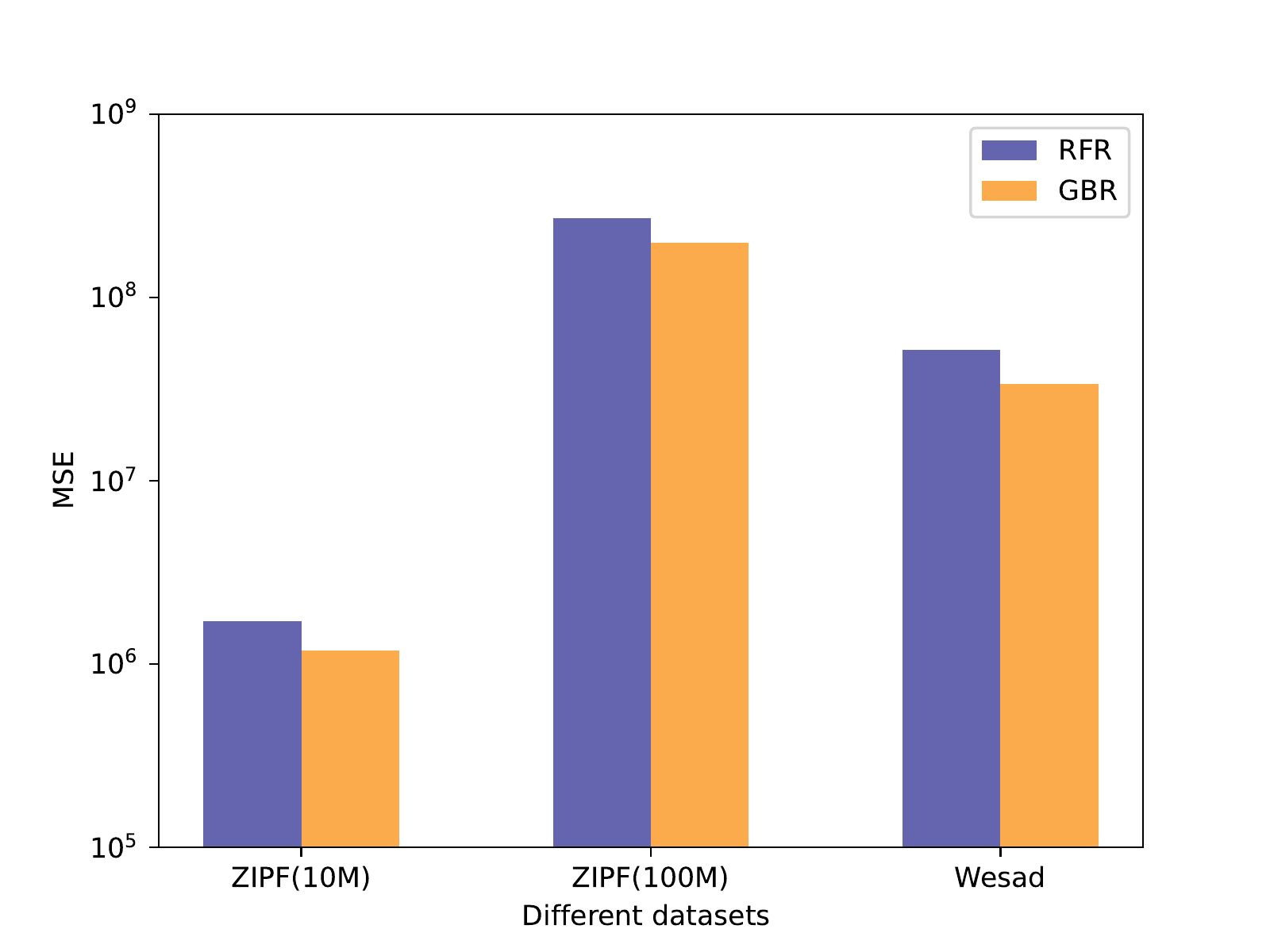}
\caption{The performance on different models for all datasets.}
\label{Fig:model_compared}
\end{figure*}

We set the parameters of the models as $learning rate=0.05,$ $estimators=350,$ $max depth=5$ for Zipf(10M), Zipf(100M) datasets, and $learning rate=0.1$, $estimators=100$, $max depth=3$ for WESAD dataset. These parameters are tuned according to the GridSearchCV\footnote{\url{https://scikit-learn.org/stable/modules/generated/sklearn.model_selection.GridSearchCV.html?highlight=gridsearchcv#sklearn.model_selection.GridSearchCV}}.

\noindent \underline{Competitors}

In the following experiments, we compare the performance of our LDPLCM with state-of-the-art algorithms including Apple-CMS and Apple-HCMS.

(1)Apple-CMS: The clients flip each bit of their one-hot vectors with the probability $\frac{1}{e^{\frac{\epsilon}{2}+1}}$ and send the perturbed items to the server. Then, the server stores the perturbed items in a $k \times m$ sketch matrix and estimates by averaging the hash entries.

(2)Apple-HCMS: The clients compute the Hadamard transform of any one-hot vectors and flip with the probability $\frac{1}{e^{\frac{\epsilon}{2}+1}}$. The algorithm in the server is as same as Apple-CMS.

(3)FLH~\cite{cormode2021frequency}: The clients are restricted by a parameter $k'$ to choose hash functions. The hash functions map the data domain $[d]$ to $[g]$, where $g=e^\epsilon$. The service calculates which domain elements a client's perturbed item contributes frequency towards by pre-computing a $k' \times d$ matrix.

\subsection{Accuracy}

\begin{figure*}
\centering
  \subfigure[Zipf(10M).]{
    \label{Fig:10zipf_accuracy}
    \includegraphics[width=0.3\linewidth]{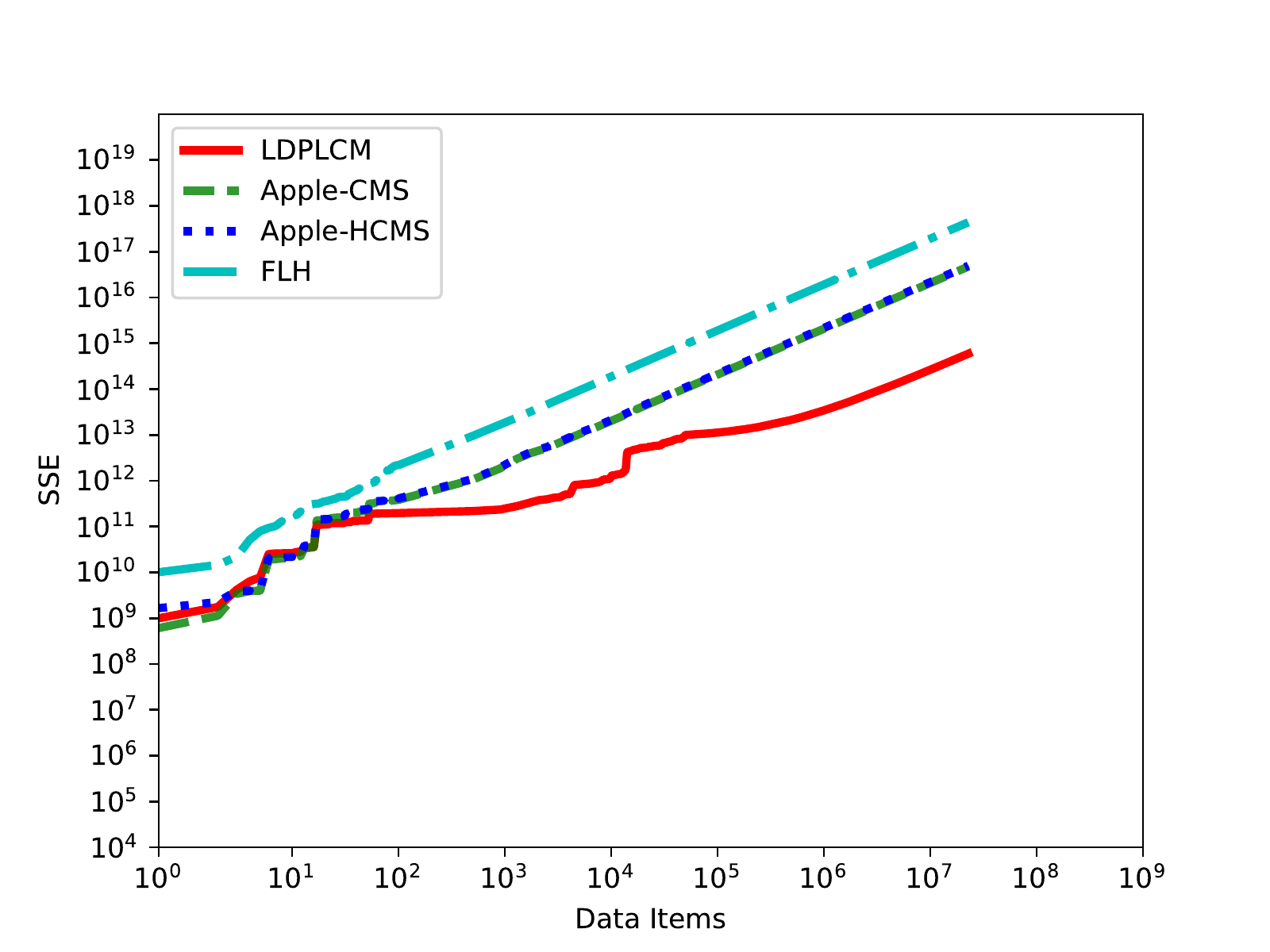}}
  \subfigure[Zipf(100M).]{
    \label{Fig:100zipf_accuracy}
    \includegraphics[width=0.3\linewidth]{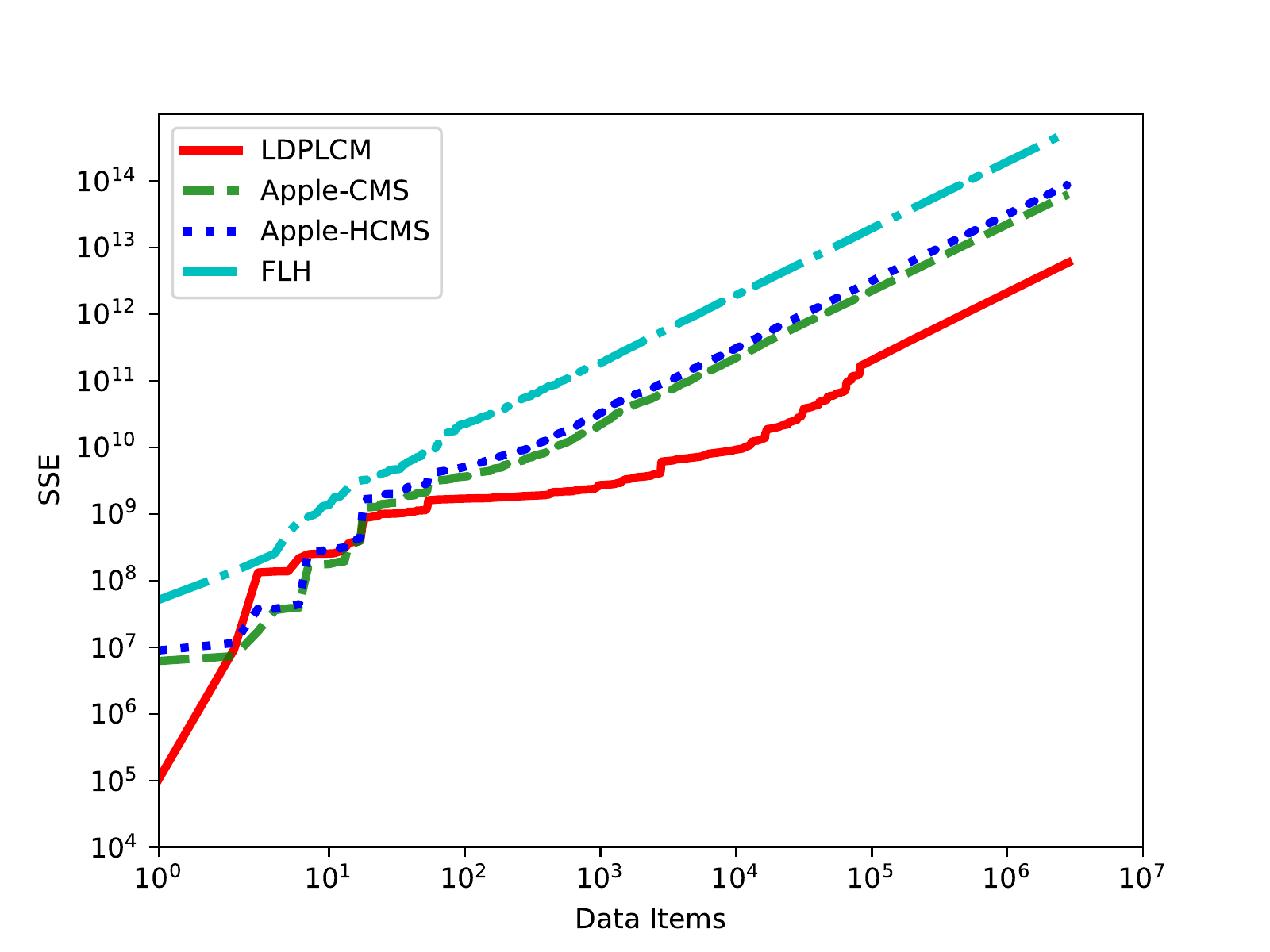}}
  \subfigure[Wesad.]{
    \label{Fig:Wesad_accuracy}
    \includegraphics[width=0.3\linewidth]{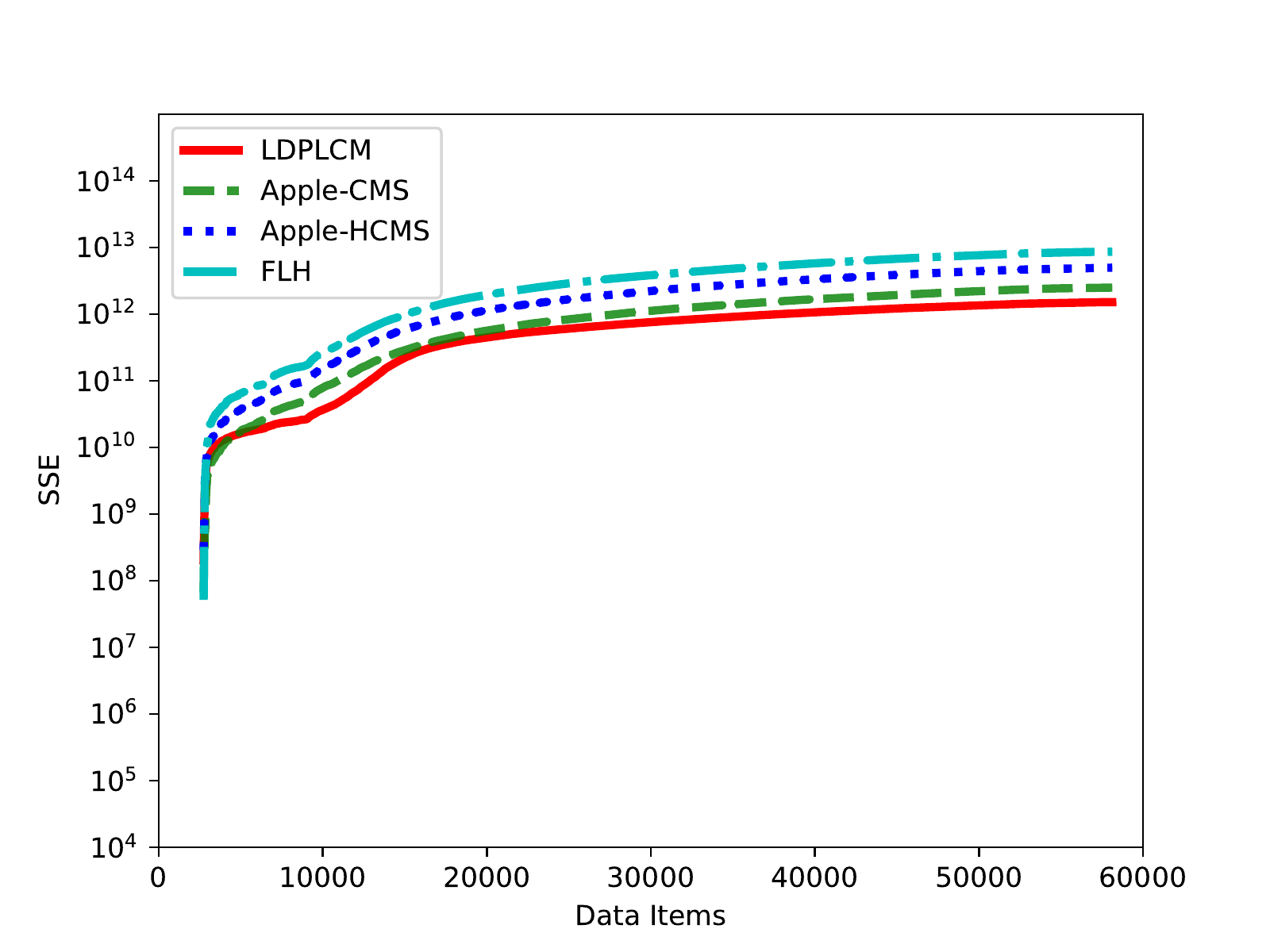}}
\caption{The Accuracy of Frequency Estimation}
\label{Fig:accuracy}
\end{figure*}

In this experiment, we compare the accuracy of our method with the Apple-CMS and Apple-HCMS on the three datasets. We set $r=0.1$ and $\epsilon=4$.
The experimental results on Zipf(10M), Zipf(100M) and WESAD datasets are shown in Figure~\ref{Fig:10zipf_accuracy}, Figure~\ref{Fig:100zipf_accuracy}, and Figure~\ref{Fig:Wesad_accuracy}, respectively. The error is measured by SSE. For the Zipf(10M) and Zipf(100M), we set the sketch parameters ($m$, $k$) as $m=1024$, $k=64$ and $\theta=0.5$. For the WESAD dataset, we set the parameters as $m=512$, $k=32$ and $\theta=0.4$. We also set the parameter $k'=128$ of FLH for all datasets. Since the domain of WESAD is smaller than that of zipf datasets, we use a smaller sketch to test the impact of hash collisions.

We can learn from these figures that LDPLCM is more accurate than the Apple-CMS, Apple-HCMS and FLH on different datasets. The reason is that we train a model on some samples to distinguish the high-frequent and low-frequent items and only use the sketch to estimate the low-frequent items. The high-frequent items in phase 2 are encoded as dummy values causing no collisions with the low-frequent items. The frequency estimation of FLH is worse than others obviously because it maps the data to a small domain $g=e^\epsilon+1$. And in this experiment, $g=56$, which is much smaller than the data domain.

We also can find that the LDPLCM estimation error of high-frequent items is slightly higher than the Apple-CMS and Apple-HCMS. It is reasonable since the high-frequent items are estimated based on the frequency model trained according to the samples, the estimations are slightly higher than the other methods that estimate frequency based on sketches built with the entire dataset. But the total errors of the estimation for all the items are lower than the other methods. It is because the estimations of our method for the low-frequent items are more accurate than the others. In addition, the sampling errors have little effect on the results for the larger datasets as shown in Figure~\ref{Fig:100zipf_accuracy}, which illustrates the applicability of our method to a larger domain.

\subsection{Space cost}

We test the accuracy of different methods with similar space costs on Zipf(100M). In this experiment, we only compare LDPLCM with Apple-CMS and Apple-HCMS since they use the same structure(sketch) to estimate the frequency. The space cost of LDPLCM includes the size of the frequency model and the size of the sketch. The space costs of Apple-CMS and Apple-HCMS are only the size of the sketches. We take a variety of settings to make different methods have similar space costs. The settings (the total space cost, the size of frequency model and the size of sketch) of different methods are shown in Table~\ref{Tab:space_cost}.

The experimental result about space cost and accuracy is shown in Figure~\ref{Fig:100zipf_spacecost}.
We can learn that the errors decrease with space cost. It is reasonable since a larger sketch can reduce hash collisions. It is clear that our LDPLCM is more accurate than the Apple-CMS and Apple-HCMS while occupying the similar space. The frequency model is more lightweight than sketch, so that it can be transferred more quickly between the clients and the server.

\begin{table}[hptb]
\centering
\caption{Settings for the space cost experiment.}
\label{Tab:space_cost}
\begin{tabular}{|lll|l|}
\hline
\multicolumn{3}{|l|}{LDPLCM}                                                                                                   & \multicolumn{1}{|l|}{Apple-CMS/Apple-HCMS}                   \\ \hline
\multicolumn{1}{|l|}{Total Space Cost(KB)} & \multicolumn{1}{l|}{Model Size(KB)} & Sketch Size(KB) & \multicolumn{1}{|l|}{Total Space Cost(KB)}\\ \hline
\multicolumn{1}{|l|}{685}                 & \multicolumn{1}{l|}{461}            & 224             & \multicolumn{1}{|l|}{896}                  \\ \hline
\multicolumn{1}{|l|}{2235}                 & \multicolumn{1}{l|}{1339}           & 896             & \multicolumn{1}{|l|}{1792}                 \\ \hline
\multicolumn{1}{|l|}{3141}                 & \multicolumn{1}{l|}{1349}           & 1792            & \multicolumn{1}{|l|}{3584}                 \\ \hline
\multicolumn{1}{|l|}{4921}                 & \multicolumn{1}{l|}{1337}           & 3584            & \multicolumn{1}{|l|}{7168}                 \\ \hline
\multicolumn{1}{|l|}{8514}                 & \multicolumn{1}{l|}{1346}           & 7168            & \multicolumn{1}{|l|}{8848}                 \\ \hline
\end{tabular}
\end{table}

\begin{figure*}
\centering
    \includegraphics[scale=0.4]{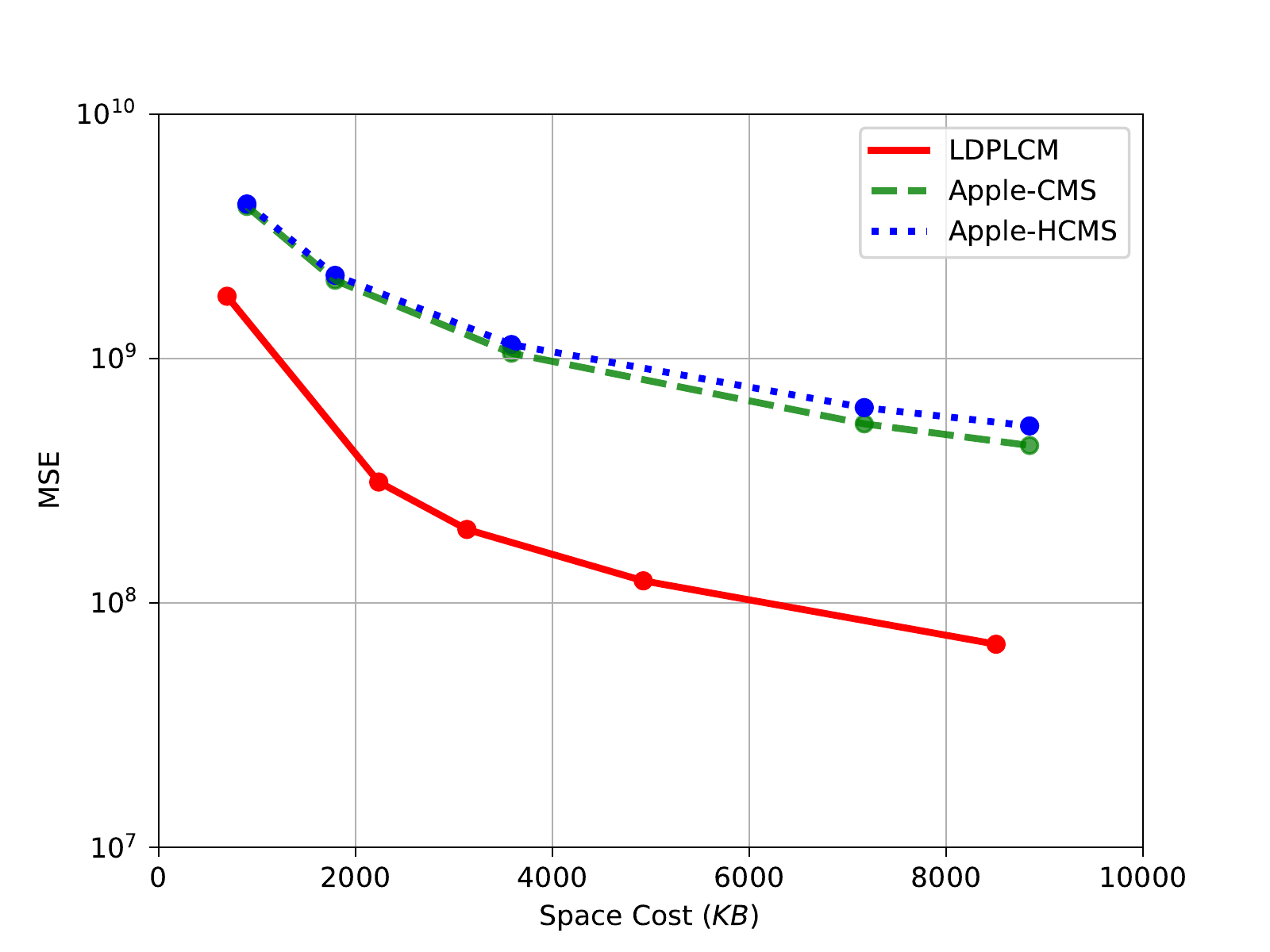}
\caption{The impact of space cost on the accuracy.}
\label{Fig:100zipf_spacecost}
\end{figure*}

\subsection{Efficiency}
We test the query efficiency of different methods on the Zipf(100M) dataset. We test 50,000 data records and average the query time as shown in Figure~\ref{Fig:100zipf_eff}. Our algorithm is slightly less efficient than the Apple-CMS, Apple-HCMS. The reason is that the estimation time of our method includes not only the time to estimate the frequency according to the sketch or matrix, but also the time to predict the frequency based on the model. FLH is the most efficient method, since it maps the data to a small domain $e^\epsilon+1$. But its accuracy is worse than the other methods as shown in Fig~\ref{Fig:accuracy}.%From the figure~\ref{Fig:100zipf_eff}, we can find that our method LDPLCM uses the frequency model to improve the accuracy without adding much extra query overhead.

\begin{figure*}
\centering
    \includegraphics[scale=0.4]{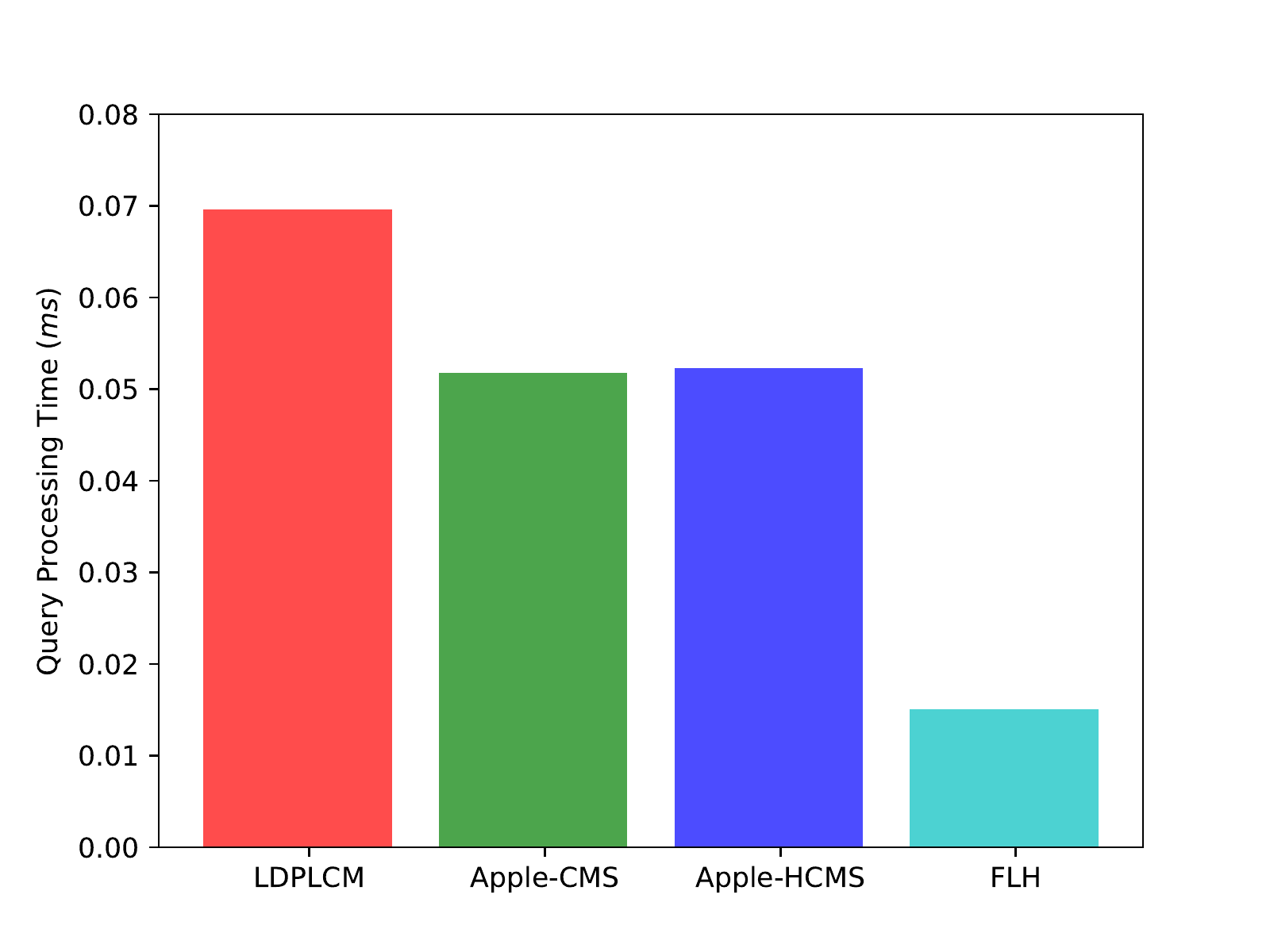}
\caption{Efficiency.}
\label{Fig:100zipf_eff}
\end{figure*}

\subsection{The impact of parameters}

We test the impact of different parameters including sketch parameters $(m,k)$, privacy budget $\epsilon$, sampling rate $r$, and the ratio of high-frequent items $\theta$ on the accuracy.%, including sketch parameters $(m,k)$, privacy budget $\epsilon$, sampling rate $r$ and the ratio of high-frequent items $\theta$ as shown in Figure \ref{Fig:10sketch parameters}-\ref{Fig:100sketch parameters}. The training model parameters are as same as in section 5.2.}

\subsubsection{The impact of sketch parameters ($m$, $k$) on the accuracy}

\begin{figure*}
\centering
  \subfigure[The impact of hash functions $k$.]{
    \label{Fig:10zipf_sketch(k)}
    \includegraphics[width=0.4\linewidth]{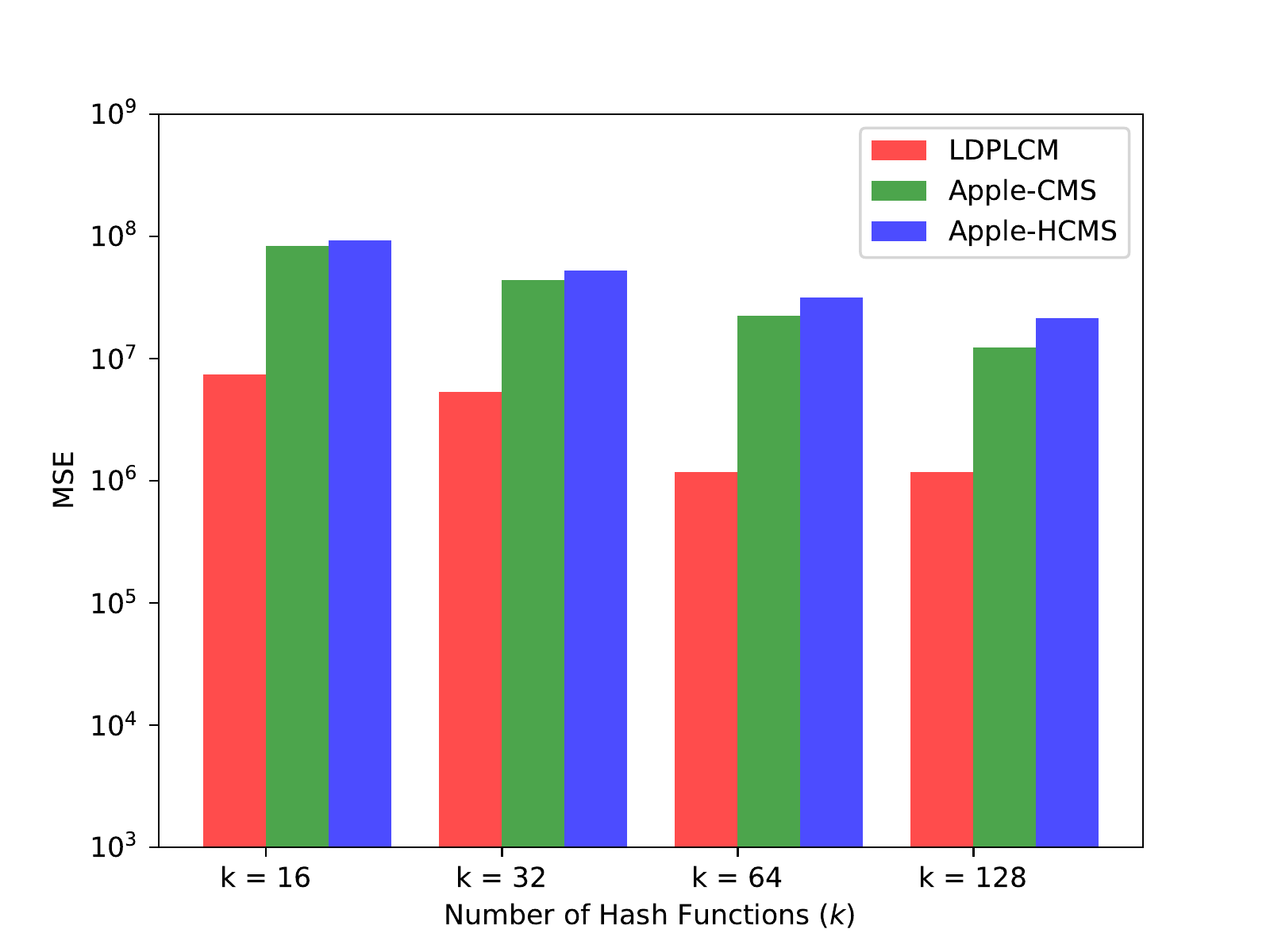}}
  \subfigure[The impact of sketch vectors $m$.]{
    \label{Fig:10zipf_sketch(m)}
    \includegraphics[width=0.4\linewidth]{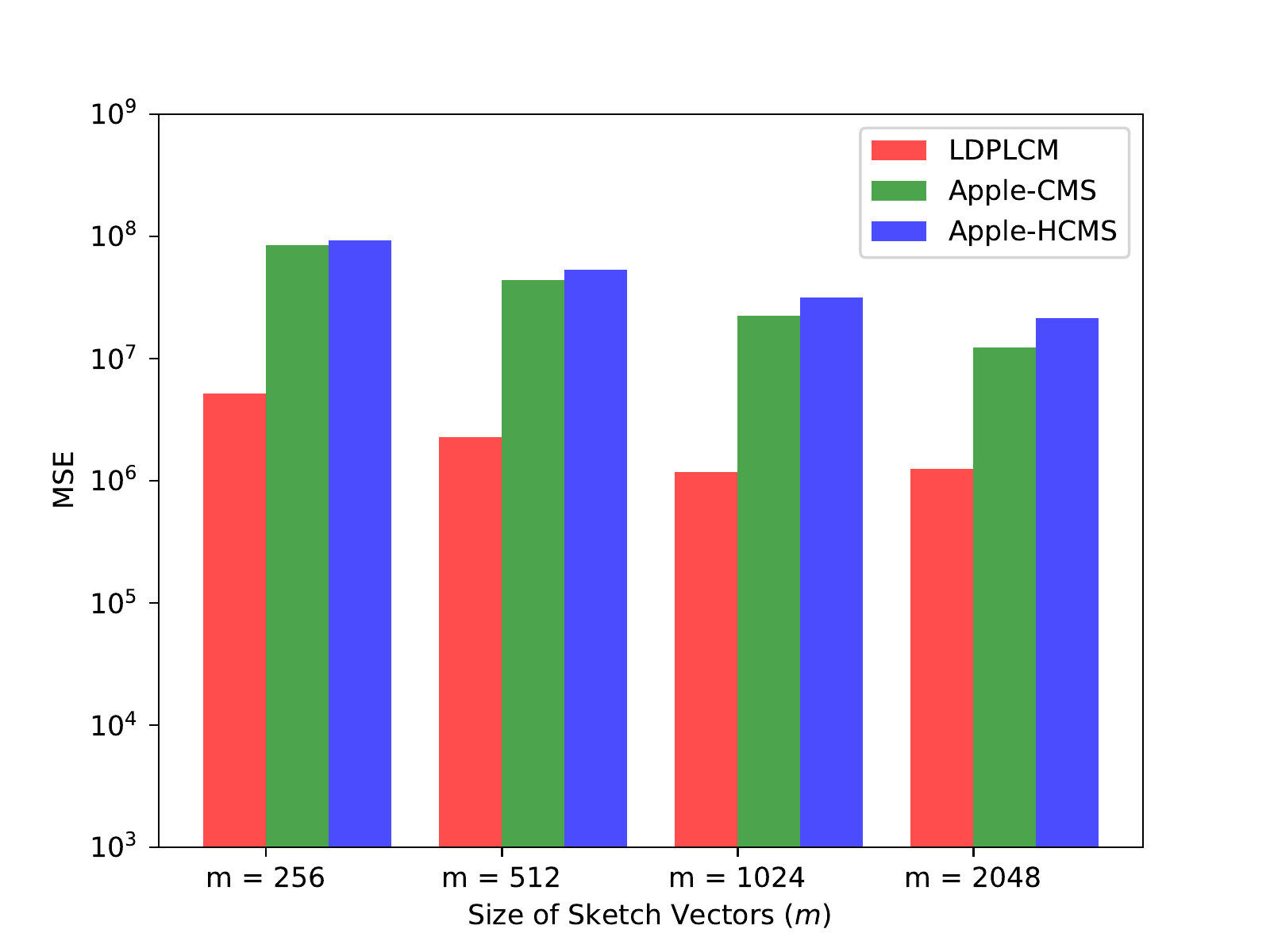}}
\caption{The impact of sketch parameters $(m,k)$ on accuracy (Zipf(10M) dataset).}
\label{Fig:10sketch parameters}
\end{figure*}

We test the impact of sketch parameters $(m,k)$ on the accuracy with Zipf(10M) and Zipf(100M) datasets. We fix $m=1024$ in Figure~\ref{Fig:10zipf_sketch(k)} and Figure~\ref{Fig:100zipf_sketch(k)} and $k=64$ in Figure~\ref{Fig:10zipf_sketch(m)} and Figure~\ref{Fig:100zipf_sketch(m)}. The other parameters are $\theta=0.5, r=0.1, \epsilon=4$. Also, we fix $m=512$ in Figure~\ref{Fig:wesad_sketch(k)} and $k=32$ in Figure~\ref{Fig:wesad_sketch(m)} while the other parameters are $\theta=0.4, r=0.1, \epsilon=4$. From the Figure~\ref{Fig:10sketch parameters}, Figure~\ref{Fig:100sketch parameters} and Figure~\ref{Fig:wesad parameters}, we can learn that LDPLCM outperforms Apple-CMS and Apple-HCMS as $(m,k)$ keeps increasing. The maximum gap is obtained when the $(m,k)$ takes the minimum value. This phenomenon is reasonable because the smaller sketches mean more hash collisions, and then the more significant the accuracy improvement of our method by avoiding the collisions. A widening gap between our method and Apple-CMS is more pronounced when the domain is larger since a larger domain brings more hash collisions when the sketch is fixed. LDPLCM uses the model to predict the estimation for high-frequent items and reduce the collisions so that it improves the accuracy for low-frequent items.

\subsubsection{The impact of privacy budget $\epsilon$ on the accuracy}
\begin{figure*}
\centering
  \subfigure[The impact of hash functions $k$.]{
    \label{Fig:100zipf_sketch(k)}
    \includegraphics[width=0.4\linewidth]{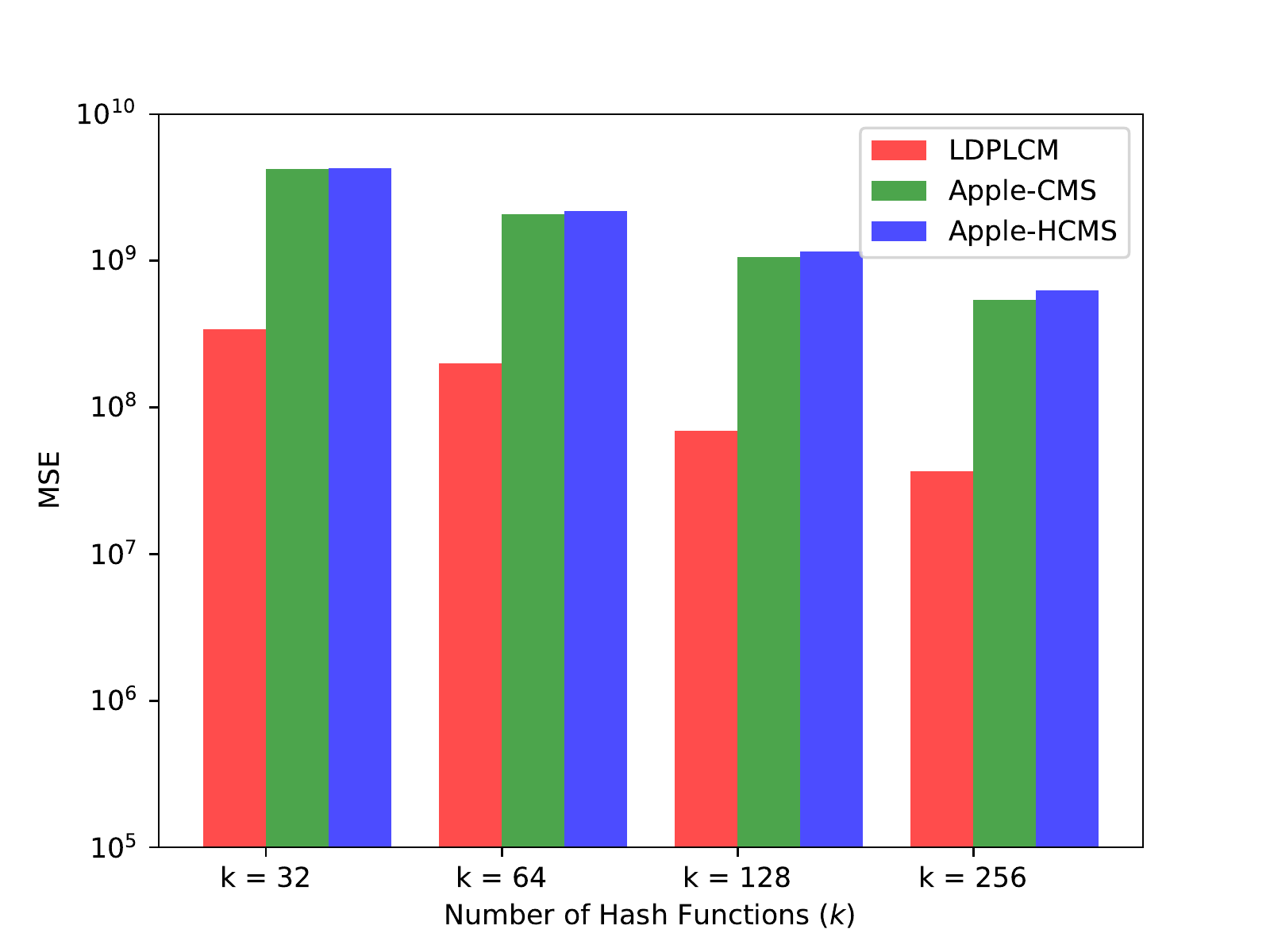}}
  \subfigure[The impact of sketch vectors $m$.]{
    \label{Fig:100zipf_sketch(m)}
    \includegraphics[width=0.4\linewidth]{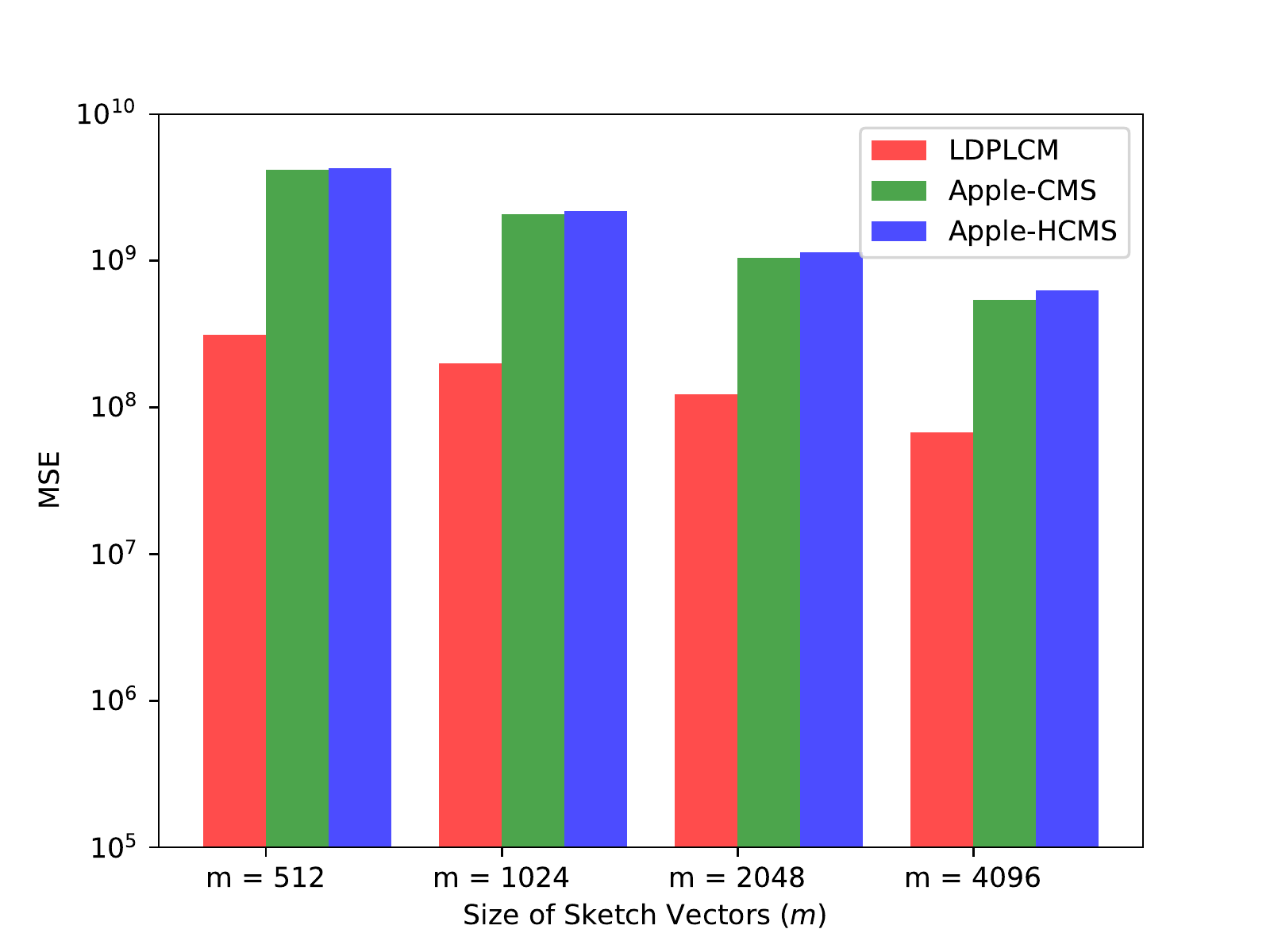}}
\caption{The impact of sketch parameters $(m,k)$ on the accuracy (Zipf(100M) dataset).}
\label{Fig:100sketch parameters}
\end{figure*}

\begin{figure*}
\centering
  \subfigure[The impact of hash functions $k$.]{
    \label{Fig:wesad_sketch(k)}
    \includegraphics[width=0.4\linewidth]{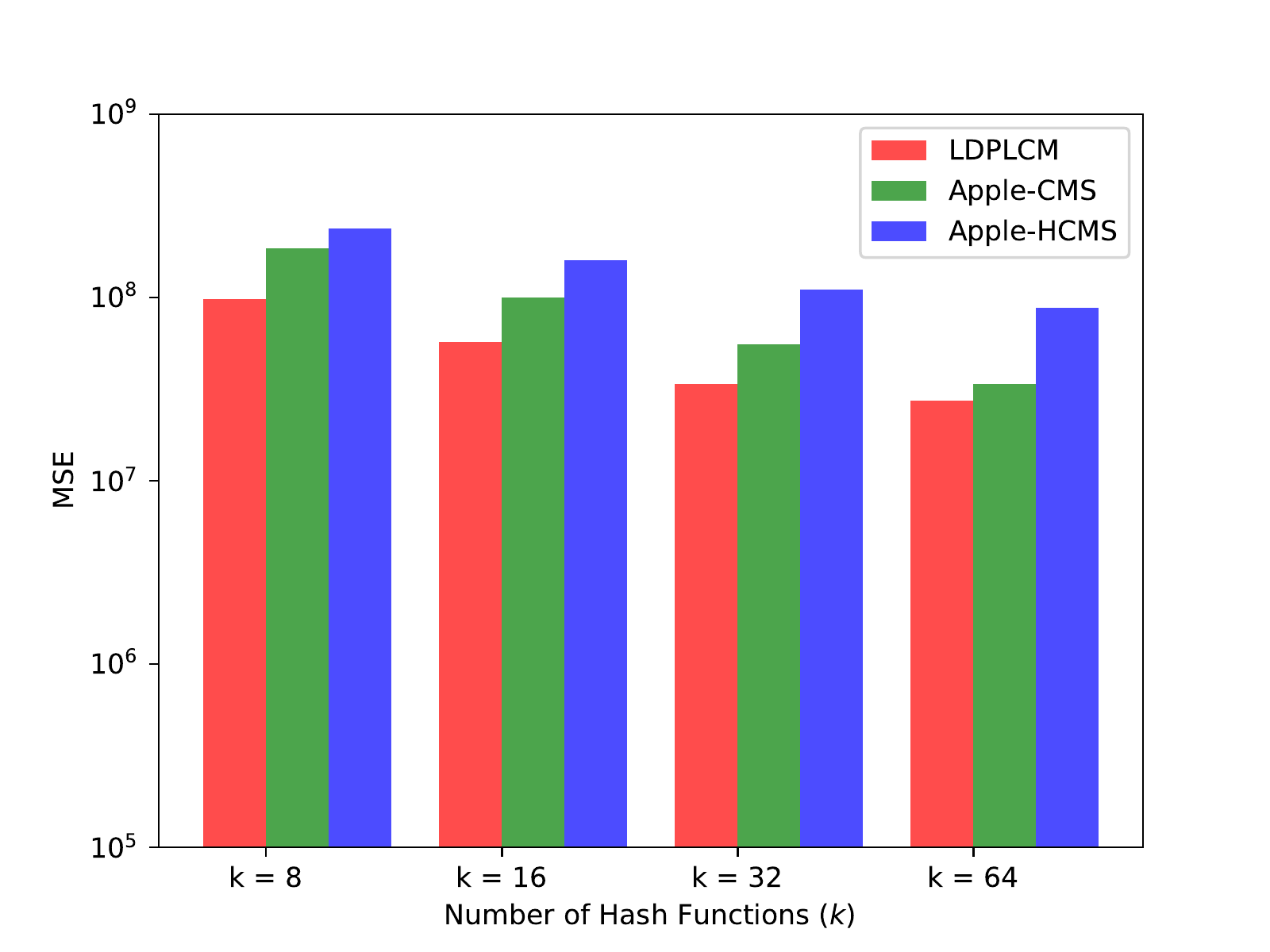}}
  \subfigure[The impact of sketch vectors $m$.]{
    \label{Fig:wesad_sketch(m)}
    \includegraphics[width=0.4\linewidth]{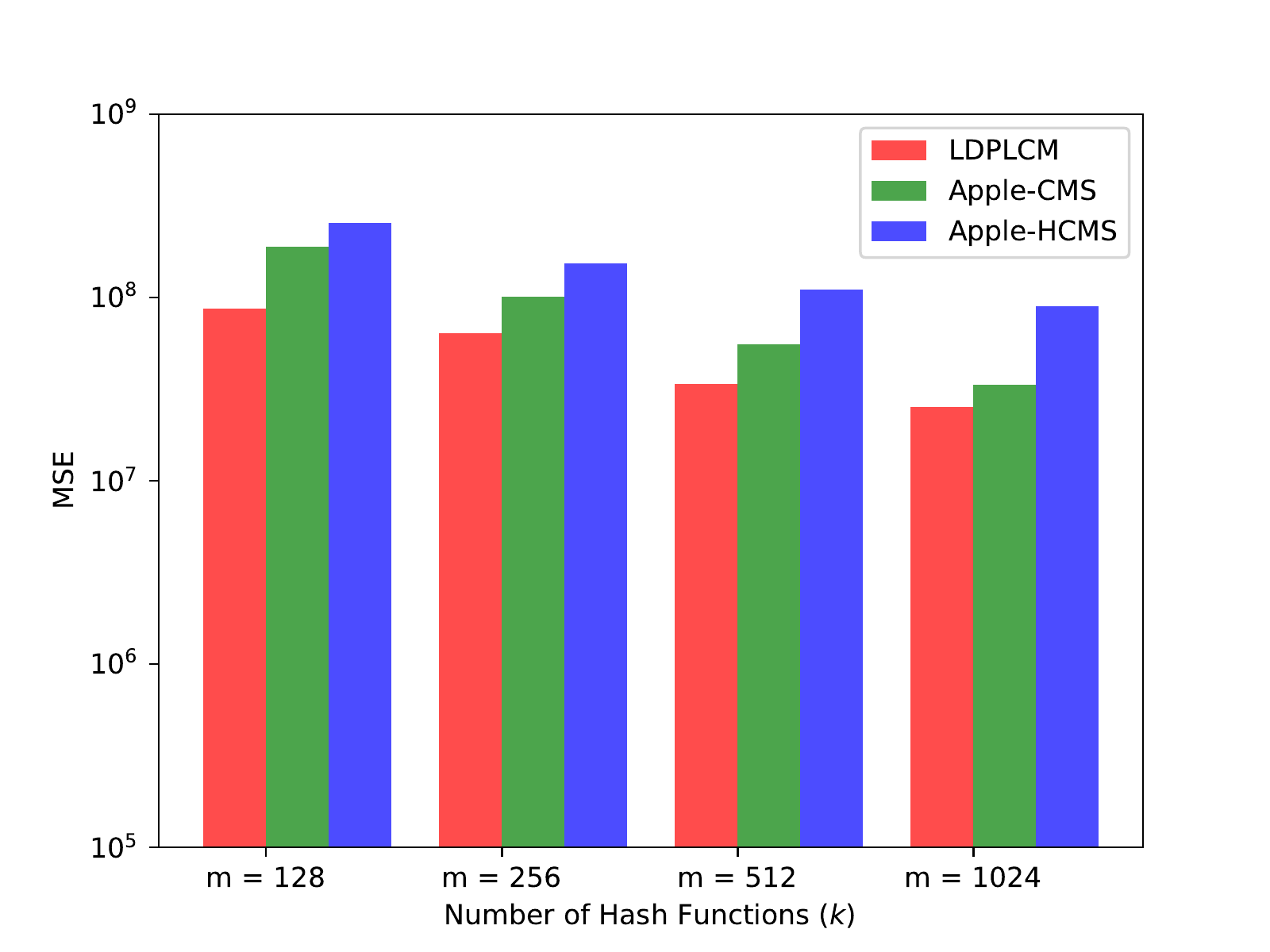}}
\caption{The impact of sketch parameters $(m,k)$ on the accuracy (Wesad dataset).}
\label{Fig:wesad parameters}
\end{figure*}

\begin{figure}
\centering
  \subfigure[Zipf(10M).]{
    \label{Fig:10zipf_eps}
    \includegraphics[width=0.3\linewidth]{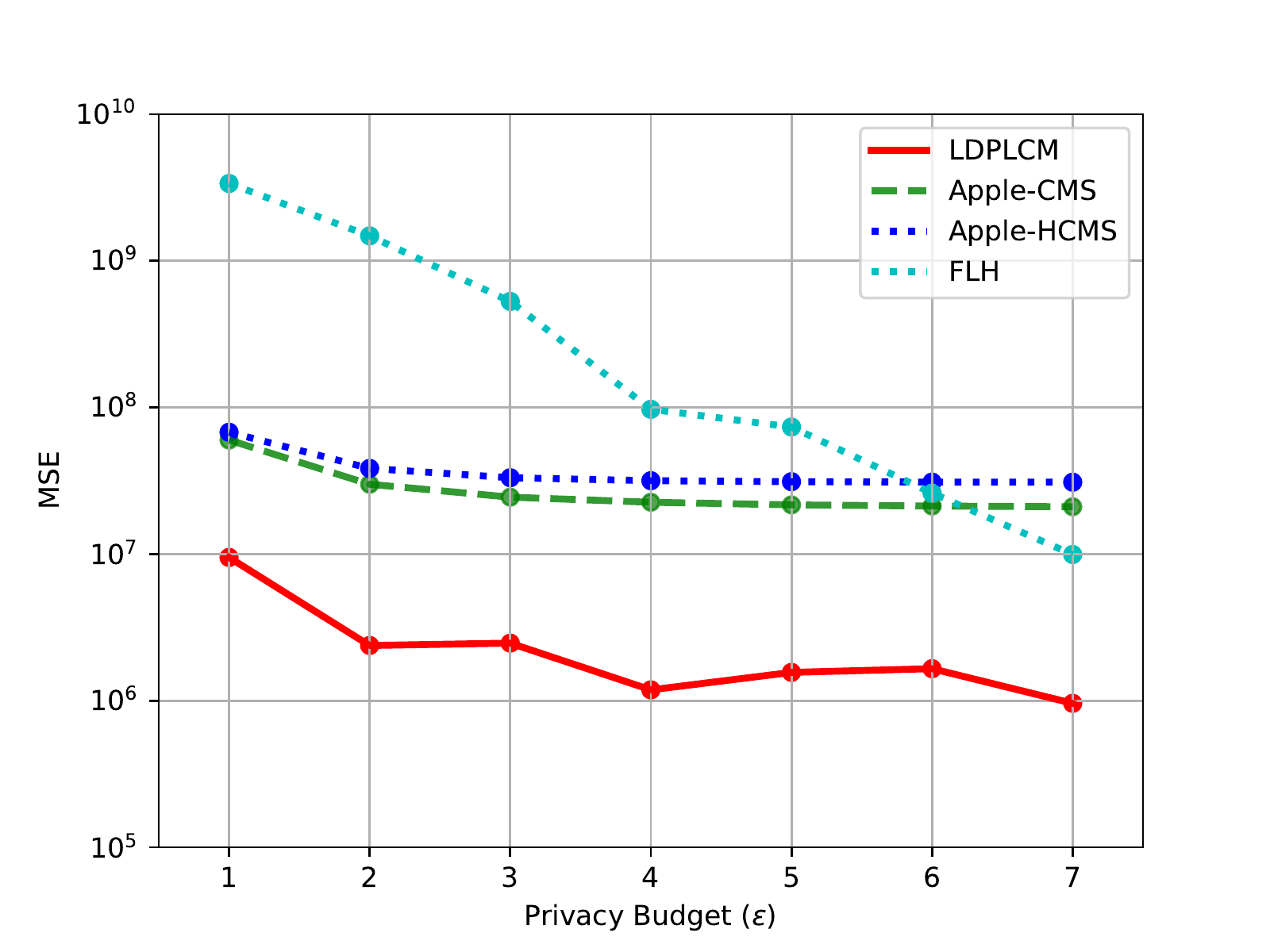}}
  \subfigure[Zipf(100M).]{
    \label{Fig:100zipf_eps}
    \includegraphics[width=0.3\linewidth]{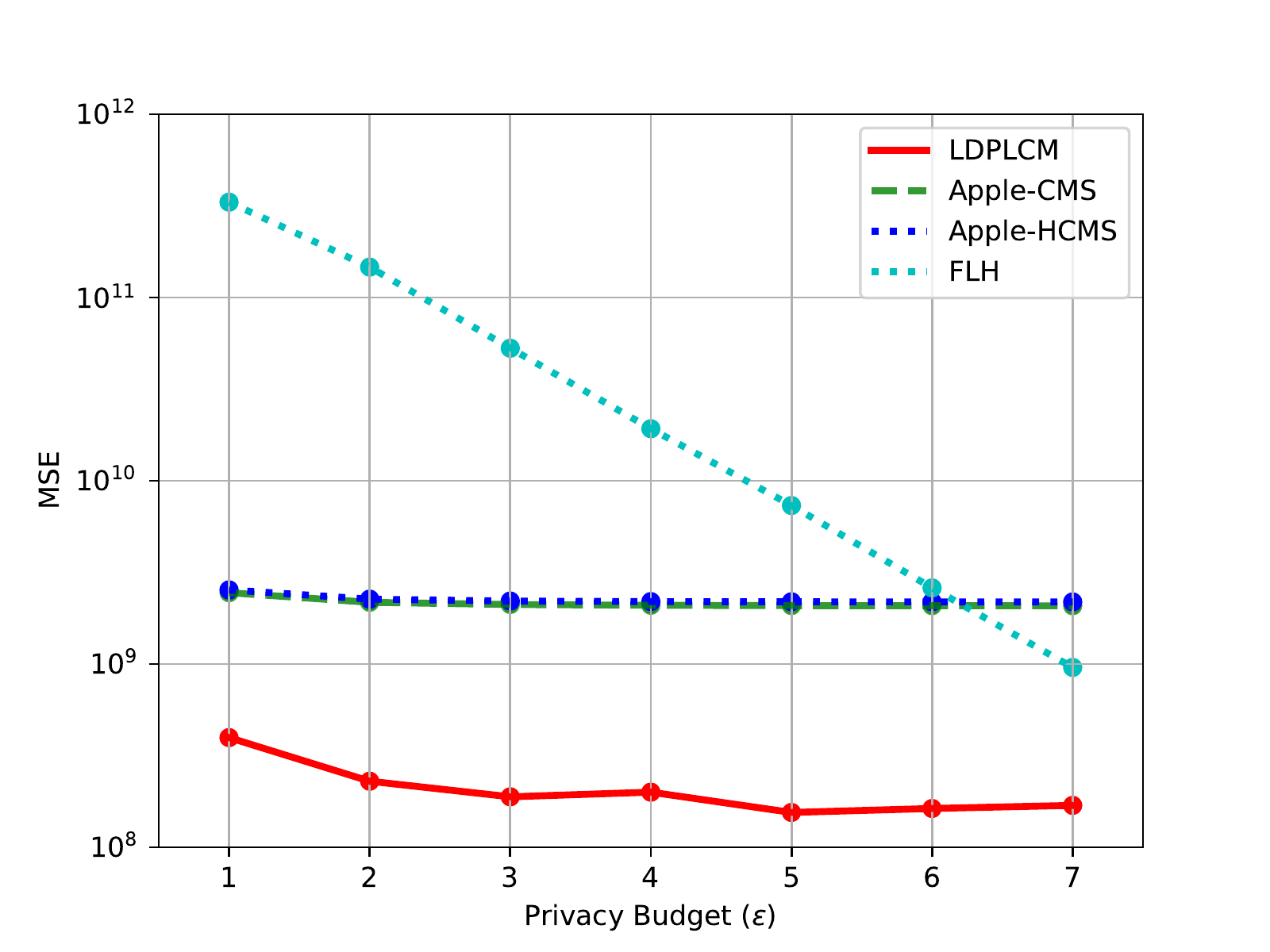}}
  \subfigure[Wesad.]{
    \label{Fig:wesad_eps}
    \includegraphics[width=0.3\linewidth]{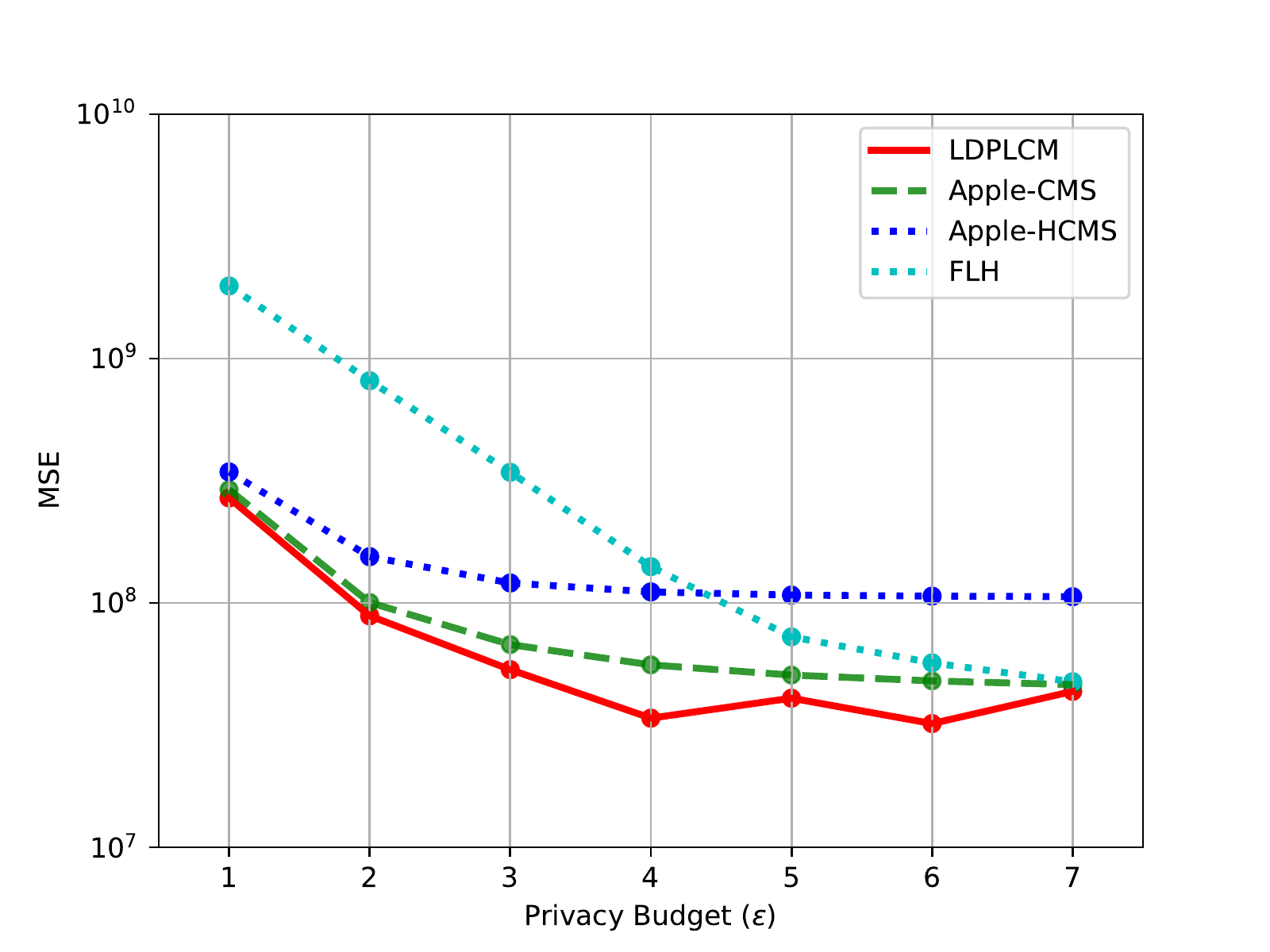}}
\caption{The impact of privacy budget $\epsilon$ on the accuracy.}
\label{Fig:dataset_eps}
\end{figure}

In Figure~\ref{Fig:dataset_eps}, we vary the privacy budget $\epsilon \in \left \{ 1,2,...,7 \right \} $ on the accuracy with the Zipf(10M), Zipf(100M) and WESAD datasets. For Zipf(10M) and Zipf(100M) datasets, we fix $k=64, m=1024, \theta=0.5, r=0.1$. For WESAD dataset, we fix $k=32, m=512, \theta=0.4, r=0.1$. We also fix $k'=128$ in FLH for all datasets. We can see that our algorithm is more accurate than Apple-CMS, Apple-HCMS and FLH with different privacy budgets on different datasets. The reason is that our method reduces the estimation errors of low-frequent items by avoiding hash collisions between high-frequent and low-frequent items, while the other methods face the errors caused by a large number of hash collisions in addition to the noises introduced by LDP. By comparing the Fig~\ref{Fig:10zipf_eps}, Fig~\ref{Fig:100zipf_eps} and Fig~\ref{Fig:wesad_eps}, we can learn that our method outperforms the other methods on datasets with a large domain and small privacy budget.

% The reason is that LDPLCM provides more accurate frequency estimation for the low-frequent items, which results in the smaller error among the data domain. Also, the trend of total MSE is decreased in Apple-CMS and not in LDPLCM. Since LDPLCM takes the frequency model to distinguish items in queries, MSE will fluctuate while predicting the high-frequent items so that the same situation occurs in total MSE. It means that fitting data close to the true frequency is necessary.

\subsubsection{The impact of sampling rate $r$ on the accuracy}

We set $k=64$, $m=1024$, $\theta=0.5$ and $\epsilon=4$ in Figure~\ref{Fig:100zipf_sample_rate}. We vary the sampling rate $r \in \left \{ 0.10,0.15,...,0.30 \right \}$ on the accuracy with the Zipf(100M) dataset. To ensure fairness, we make the precision of each frequency model similar for different sampling rates. We can find that the error decreases as the sampling rate increases. This phenomenon is consistent with the theoretical result, because the higher sampling rate reduces the error of sampling-based estimation for high-frequent data items. %The learned-model provides more accurate estimation and decreases the hash collisions in the estimating queries.}

%\begin{figure}
%\centering
%    \includegraphics[scale=0.4]{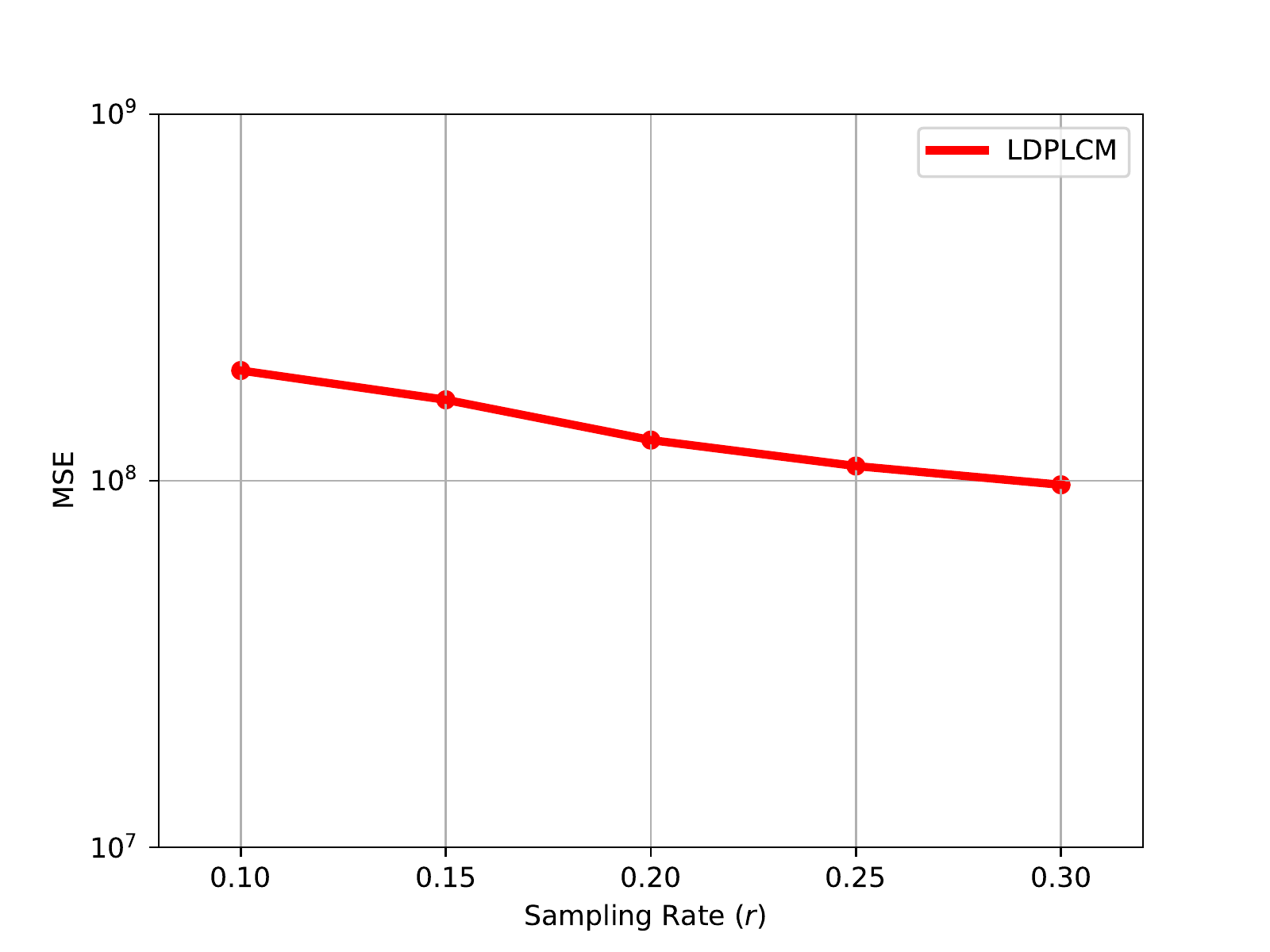}
%\caption{The impact of sampling rate $r$ on the accuracy of LDPLCM.}
%\label{Fig:100zipf_sample_rate}
%\end{figure}

\begin{figure}[htbp]
\centering
\begin{minipage}[t]{0.48\textwidth}
\centering
\includegraphics[scale=0.4]{Figure_11}
\caption{The impact of sampling rate $r$ on the accuracy of LDPLCM.}
\label{Fig:100zipf_sample_rate}
\end{minipage}
\begin{minipage}[t]{0.48\textwidth}
\centering
\includegraphics[scale=0.4]{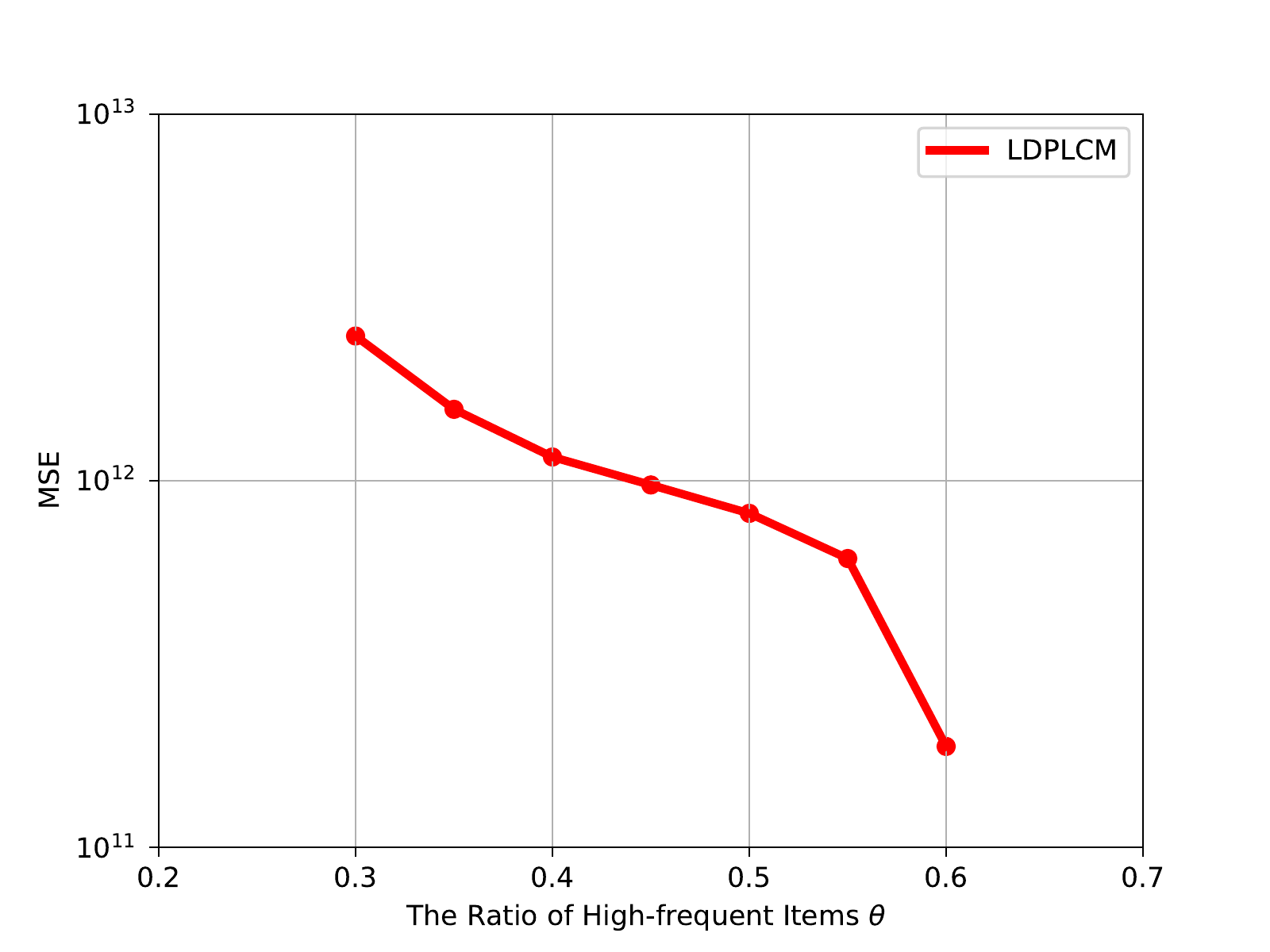}
\caption{The impact of $\theta$ on the accuracy of LDPLCM.}
\label{Fig:100zipf_theta}
\end{minipage}
\end{figure}

\subsubsection{The impact of $\theta$ on the accuracy}

%\begin{figure}
%\centering
%    \includegraphics[scale=0.4]{Figure_12}
%\caption{The impact of $\theta$ on the accuracy of LDPLCM.}
%\label{Fig:100zipf_theta}
%\end{figure}

The parameter $\theta$ means the ratio of the total frequencies of high-frequent items to the frequencies of all the items. We vary $\theta \in \{0.3, 0.35, ..., 0.6\}$ and test the impact of $\theta$ on the accuracy with the Zipf(100M) dataset. We set the parameters as $k=64$, $m=1024$, $r=0.1$, and $\epsilon=4$. We can learn from Figure~\ref{Fig:100zipf_theta} that a higher $\theta$ makes LDPLCM more accurate. The reason is that a higher $\theta$ means more high-frequent items are predicted by the model instead of the sketch. As a result, the hash collisions in the sketch are reduced, which leads to a more accurate estimation.
%Due to the higher theta, the frequency boundary is larger. More items are marked as high-frequent so that fewer data items stored in sketch. The reduction of total MSE is taken by improving the estimation accuracy of the low-frequent items.

\subsubsection{The impact of $s$ on the accuracy}

\begin{figure}
\centering
    \includegraphics[scale=0.4]{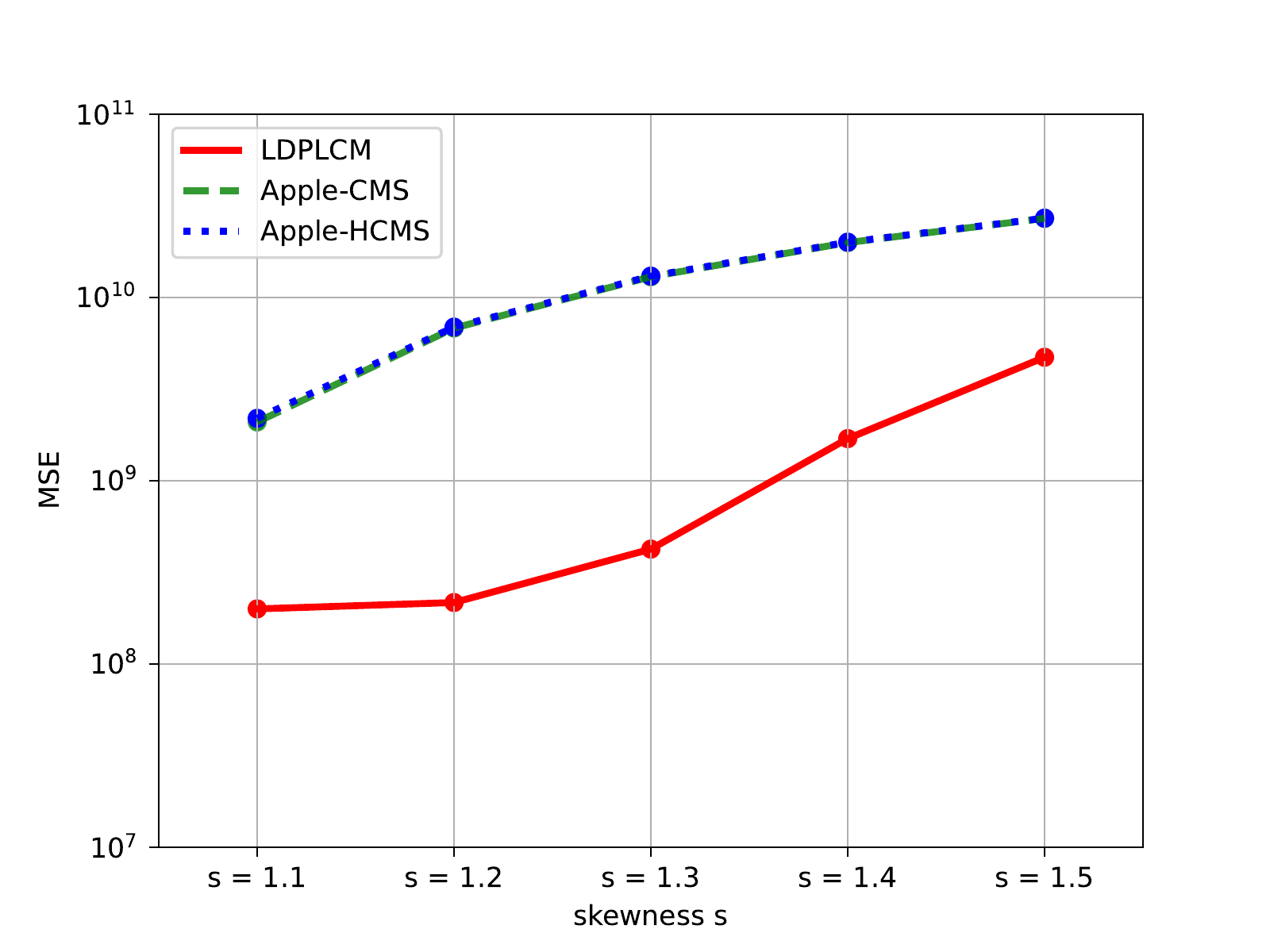}
\caption{The impact of $s$ on the accuracy of LDPLCM.}
\label{Fig:100zipf_s}
\end{figure}

The parameter $s$ means the skewness of the Zipf distribution dataset. We vary $s \in \{1.1, 1.2, ..., 1.5\}$ and test the impact of $s$ on the accuracy with the Zipf(100M) dataset. We set the parameters as $k=64$, $m=1024$, $r=0.1$, and $\epsilon=4$. As shown in Figure~\ref{Fig:100zipf_s}, we can find that a smaller $s$ makes LDPLCM more accurate. As the sum of frequency of high-frequent items in the experiments with different skewness ``$s$'' are the same, the lower skewness means more items will be regarded as high-frequent items. Thus, more items are predicted by the frequency model. As a result, it reduces more hash collisions and leads to a more accurate estimation.

\subsection{Summary of experimental results}

The experimental results are summarized as follows:

\begin{itemize}
\item[$\bullet$] LDPLCM is more accurate than Apple-CMS, Apple-HCMS and FLH for frequency estimation on datasets with large domains.

\item[$\bullet$] The higher the frequency of high-frequent items removed from the sketch, the more accurate the low-frequent items estimations become.

%\item[$\bullet$] The higher sampling rate $r$ improves the prediction ability of model and increases the accuracy for high-frequent data items.

\item[$\bullet$] The superiority of LDPLCM is more obvious when the sketch is limited to a small size.

\end{itemize}

\section{Conclusion}
In this paper, we propose the LDPLCM algorithm to reduce hash collisions and provide more accurate frequency estimations under LDP, especially for data with a larger domain. We train the frequency model to distinguish the high-frequent items and the low-frequent items, and we separate the storage of the items with different frequency properties without leaking privacy. Our method focuses on the frequency estimation for one-dimensional data under LDP. Extending the method for multidimensional data is still challenging. Some frontier works adopt the multidimensional histograms or grids for multidimensional frequency estimation under LDP, however, they still suffer from the curse of dimension and the trade-off between the accuracy and the utility. We will try to extend our work for multidimensional data in the future.

\section*{Acknowledgements}
This work was supported by NSFC grant 62202113, the Major Key Project of PCL (Grant No.PCL2021A09, PCL2021A02, PCL2022A03).

%% The Appendices part is started with the command \appendix;
%% appendix sections are then done as normal sections
%% \appendix

%% \section{}
%% \label{}

%% If you have bibdatabase file and want bibtex to generate the
%% bibitems, please use
%%
%%  \bibliographystyle{elsarticle-num}
%%  \bibliography{<your bibdatabase>}
\bibliographystyle{elsarticle-num}
\bibliography{mybibfile}

\begin{thebibliography}{10}
\expandafter\ifx\csname url\endcsname\relax
  \def\url#1{\texttt{#1}}\fi
\expandafter\ifx\csname urlprefix\endcsname\relax\def\urlprefix{URL }\fi
\expandafter\ifx\csname href\endcsname\relax
  \def\href#1#2{#2} \def\path#1{#1}\fi

\bibitem{sweeney2002k}
L.~Sweeney, k-anonymity: A model for protecting privacy, International Journal
  of Uncertainty, Fuzziness and Knowledge-Based Systems 10~(05) (2002)
  557--570.

\bibitem{dwork2008differential}
C.~Dwork, Differential privacy: A survey of results, in: International
  conference on theory and applications of models of computation, Springer,
  2008, pp. 1--19.

\bibitem{kasiviswanathan2011can}
S.~P. Kasiviswanathan, H.~K. Lee, K.~Nissim, S.~Raskhodnikova, A.~Smith, What
  can we learn privately?, SIAM Journal on Computing 40~(3) (2011) 793--826.

\bibitem{erlingsson2014rappor}
{\'U}.~Erlingsson, V.~Pihur, A.~Korolova, Rappor: Randomized aggregatable
  privacy-preserving ordinal response, in: Proceedings of the 2014 ACM SIGSAC
  conference on computer and communications security, 2014, pp. 1054--1067.

\bibitem{fanti2015building}
G.~Fanti, V.~Pihur, {\'U}.~Erlingsson, Building a rappor with the unknown:
  Privacy-preserving learning of associations and data dictionaries, arXiv
  preprint arXiv:1503.01214.

\bibitem{APPLE}
A.~Differential Privacy~Team, Learning with privacy at scale.

\bibitem{ding2017collecting}
B.~Ding, J.~Kulkarni, S.~Yekhanin, Collecting telemetry data privately,
  Advances in Neural Information Processing Systems 30.

\bibitem{wang2017locally}
T.~Wang, J.~Blocki, N.~Li, S.~Jha, Locally differentially private protocols for
  frequency estimation, in: 26th USENIX Security Symposium (USENIX Security
  17), 2017, pp. 729--745.

\bibitem{yildirim2020differentially}
S.~Y{\i}ld{\i}r{\i}m, K.~Kaya, S.~Ayd{\i}n, H.~B. Erentu{\u{g}}, Differentially
  private frequency sketches for intermittent queries on large data streams,
  in: 2020 IEEE International Conference on Big Data (Big Data), IEEE, 2020,
  pp. 4083--4092.

\bibitem{li2020wavingsketch}
J.~Li, Z.~Li, Y.~Xu, S.~Jiang, T.~Yang, B.~Cui, Y.~Dai, G.~Zhang, Wavingsketch:
  An unbiased and generic sketch for finding top-k items in data streams, in:
  Proceedings of the 26th ACM SIGKDD International Conference on Knowledge
  Discovery \& Data Mining, 2020, pp. 1574--1584.

\bibitem{karp2003simple}
R.~M. Karp, S.~Shenker, C.~H. Papadimitriou, A simple algorithm for finding
  frequent elements in streams and bags, ACM Transactions on Database Systems
  (TODS) 28~(1) (2003) 51--55.

\bibitem{tai2018sketching}
K.~S. Tai, V.~Sharan, P.~Bailis, G.~Valiant, Sketching linear classifiers over
  data streams, in: Proceedings of the 2018 International Conference on
  Management of Data, 2018, pp. 757--772.

\bibitem{cormode2003finding}
G.~Cormode, F.~Korn, S.~Muthukrishnan, D.~Srivastava, Finding hierarchical
  heavy hitters in data streams, in: Proceedings 2003 VLDB Conference,
  Elsevier, 2003, pp. 464--475.

\bibitem{tang2016graph}
N.~Tang, Q.~Chen, P.~Mitra, Graph stream summarization: From big bang to big
  crunch, in: Proceedings of the 2016 International Conference on Management of
  Data, 2016, pp. 1481--1496.

\bibitem{yang2019adaptive}
T.~Yang, J.~Jiang, P.~Liu, Q.~Huang, J.~Gong, Y.~Zhou, R.~Miao, X.~Li,
  S.~Uhlig, Adaptive measurements using one elastic sketch, IEEE/ACM
  Transactions on Networking 27~(6) (2019) 2236--2251.

\bibitem{liu2016one}
Z.~Liu, A.~Manousis, G.~Vorsanger, V.~Sekar, V.~Braverman, One sketch to rule
  them all: Rethinking network flow monitoring with univmon, in: Proceedings of
  the 2016 ACM SIGCOMM Conference, 2016, pp. 101--114.

\bibitem{basat2019randomized}
R.~B. Basat, X.~Chen, G.~Einziger, R.~Friedman, Y.~Kassner, Randomized
  admission policy for efficient top-k, frequency, and volume estimation,
  IEEE/ACM Transactions on Networking 27~(4) (2019) 1432--1445.

\bibitem{kairouz2016discrete}
P.~Kairouz, K.~Bonawitz, D.~Ramage, Discrete distribution estimation under
  local privacy, in: International Conference on Machine Learning, PMLR, 2016,
  pp. 2436--2444.

\bibitem{bloom1970space}
B.~H. Bloom, Space/time trade-offs in hash coding with allowable errors,
  Communications of the ACM 13~(7) (1970) 422--426.

\bibitem{duchi2013local}
J.~C. Duchi, M.~I. Jordan, M.~J. Wainwright, Local privacy and statistical
  minimax rates, in: 2013 IEEE 54th Annual Symposium on Foundations of Computer
  Science, IEEE, 2013, pp. 429--438.

\bibitem{wang2019locally}
T.~Wang, M.~Lopuha{\"a}-Zwakenberg, Z.~Li, B.~Skoric, N.~Li, Locally
  differentially private frequency estimation with consistency, arXiv preprint
  arXiv:1905.08320.

\bibitem{wang2019answering}
T.~Wang, B.~Ding, J.~Zhou, C.~Hong, Z.~Huang, N.~Li, S.~Jha, Answering
  multi-dimensional analytical queries under local differential privacy, in:
  Proceedings of the 2019 International Conference on Management of Data, 2019,
  pp. 159--176.

\bibitem{murakami2019utility}
T.~Murakami, Y.~Kawamoto, $\{$Utility-Optimized$\}$ local differential privacy
  mechanisms for distribution estimation, in: 28th USENIX Security Symposium
  (USENIX Security 19), 2019, pp. 1877--1894.

\bibitem{wei2020asgldp}
C.~Wei, S.~Ji, C.~Liu, W.~Chen, T.~Wang, Asgldp: collecting and generating
  decentralized attributed graphs with local differential privacy, IEEE
  Transactions on Information Forensics and Security 15 (2020) 3239--3254.

\bibitem{xu2020collecting}
M.~Xu, B.~Ding, T.~Wang, J.~Zhou, Collecting and analyzing data jointly from
  multiple services under local differential privacy, Proceedings of the VLDB
  Endowment 13~(12) (2020) 2760--2772.

\bibitem{jia2019calibrate}
J.~Jia, N.~Z. Gong, Calibrate: Frequency estimation and heavy hitter
  identification with local differential privacy via incorporating prior
  knowledge, in: IEEE INFOCOM 2019-IEEE Conference on Computer Communications,
  IEEE, 2019, pp. 2008--2016.

\bibitem{cormode2021frequency}
G.~Cormode, S.~Maddock, C.~Maple, Frequency estimation under local differential
  privacy, Proceedings of the VLDB Endowment 14~(11) (2021) 2046--2058.

\bibitem{charikar2002finding}
M.~Charikar, K.~Chen, M.~Farach-Colton, Finding frequent items in data streams,
  in: International Colloquium on Automata, Languages, and Programming,
  Springer, 2002, pp. 693--703.

\bibitem{cormode2005improved}
G.~Cormode, S.~Muthukrishnan, An improved data stream summary: the count-min
  sketch and its applications, Journal of Algorithms 55~(1) (2005) 58--75.

\bibitem{roy2016augmented}
P.~Roy, A.~Khan, G.~Alonso, Augmented sketch: Faster and more accurate stream
  processing, in: Proceedings of the 2016 International Conference on
  Management of Data, 2016, pp. 1449--1463.

\bibitem{estan2002new}
C.~Estan, G.~Varghese, New directions in traffic measurement and accounting,
  in: Proceedings of the 2002 conference on Applications, technologies,
  architectures, and protocols for computer communications, 2002, pp. 323--336.

\bibitem{zhou2018cold}
Y.~Zhou, T.~Yang, J.~Jiang, B.~Cui, M.~Yu, X.~Li, S.~Uhlig, Cold filter: A
  meta-framework for faster and more accurate stream processing, in:
  Proceedings of the 2018 International Conference on Management of Data, 2018,
  pp. 741--756.

\bibitem{yang2018heavyguardian}
T.~Yang, J.~Gong, H.~Zhang, L.~Zou, L.~Shi, X.~Li, Heavyguardian: Separate and
  guard hot items in data streams, in: Proceedings of the 24th ACM SIGKDD
  International Conference on Knowledge Discovery \& Data Mining, 2018, pp.
  2584--2593.

\bibitem{qin2016heavy}
Z.~Qin, Y.~Yang, T.~Yu, I.~Khalil, X.~Xiao, K.~Ren, Heavy hitter estimation
  over set-valued data with local differential privacy, in: Proceedings of the
  2016 ACM SIGSAC Conference on Computer and Communications Security, 2016, pp.
  192--203.

\bibitem{kraska2018case}
T.~Kraska, A.~Beutel, E.~H. Chi, J.~Dean, N.~Polyzotis, The case for learned
  index structures, in: Proceedings of the 2018 international conference on
  management of data, 2018, pp. 489--504.

\bibitem{hsu2019learning}
C.-Y. Hsu, P.~Indyk, D.~Katabi, A.~Vakilian, Learning-based frequency
  estimation algorithms., in: International Conference on Learning
  Representations, 2019.

\bibitem{zhang2020learned}
M.~Zhang, H.~Wang, J.~Li, H.~Gao, Learned sketches for frequency estimation,
  Information Sciences 507 (2020) 365--385.

\bibitem{zhou2019rl}
Z.~Zhou, D.~Zhang, X.~Hong, Rl-sketch: Scaling reinforcement learning for
  adaptive and automate anomaly detection in network data streams, in: 2019
  IEEE 44th Conference on Local Computer Networks (LCN), IEEE, 2019, pp.
  340--347.

\bibitem{schmidt2018introducing}
P.~Schmidt, A.~Reiss, R.~Duerichen, C.~Marberger, K.~Van~Laerhoven, Introducing
  wesad, a multimodal dataset for wearable stress and affect detection, in:
  Proceedings of the 20th ACM international conference on multimodal
  interaction, 2018, pp. 400--408.

\end{thebibliography}
\end{document}